\theoremstyle{plain}
\newcommand{\ubar}[1]{\underaccent{\bar}{#1}}
\numberwithin{equation}{section}
\newtheorem{thm}{Theorem}[]
\theoremstyle{definition}
\newtheorem*{crit*}{Criterion}
\newtheorem{crit}{Criterion}[]
\newtheorem{lemma}{Lemma}[section]
\newtheorem{remark}{Remark}[section]
\newcommand{\defeq}{\vcentcolon=}
\newcommand{\defeqs}{\stackrel{\scri}{\vcentcolon=}}
\newcommand{\scri}{\mathscr{J}}
\newcommand{\ct}[2]{\tensor{{#1}}{#2}}
\newcommand{\cts}[2]{\tensor{\overline{#1}}{#2}}
\newcommand{\ctr}[3]{\tensor[{#1}]{#2}{#3}}
\newcommand{\cth}[2]{\tensor{\hat{#1}}{#2}}
\newcommand{\prn}[1]{\left(#1\right)}
\newcommand{\brkt}[1]{\left[#1\right]}
\newcommand{\cbrkt}[1]{\left\lbrace#1\right\rbrace}
\newcommand{\cd}[1]{\tensor{\nabla}{#1}}
\newcommand{\C}{\mathcal{C}}
\newcommand{\Q}{\mathcal{Q}}
\newcommand{\D}{\mathcal{D}}
\newcommand{\W}{\mathcal{W}}
\renewcommand{\P}{\mathcal{P}}
\newcommand{\B}{\mathcal{B}}
\newcolumntype{M}[1]{>{\centering\arraybackslash}m{#1}}
\newcolumntype{N}{@{}m{0pt}@{}}
\newcounter{marginnotecount}[section]
	\title{\Large \textbf{Gravitational radiation at infinity  with\\ negative cosmological constant and AdS$_4$ holography}}
	\author[]{Francisco Fernández-Álvarez\thanks{francisco.fernandez@ehu.eus}\ }
	\author[]{\ José M. M. Senovilla\thanks{josemm.senovilla@ehu.eus}} 
	\affil[]{Departamento de Física\\ Universidad del País Vasco UPV/EHU\\ Apartado 644, 48080 Bilbao, Spain }
	\date{\today{}}
\begin{document}
	
	\maketitle

	\begin{abstract}
	 The covariant characterization of the existence of gravitational radiation traversing infinity $\scri$ in the presence of a negative cosmological constant is presented. It is coherent and consistent with the previous characterizations put forward for the cases of non-negative cosmological constant, relying on the properties of the {\em asymptotic} super-Poynting vector; or in more transparent terms, based on the intrinsic properties of the flux of tidal energy at infinity. The proposed characterization is fully satisfactory, it can be  covariantly typified in terms of boundary data at infinity, and it can also be categorized according to the geometric properties of the rescaled Weyl tensor at $\scri$. The cases with no incoming radiation entering from (or no outgoing radiation escaping at) $\scri$ can similarly be  determined in terms of the boundary data or geometric properties of the rescaled Weyl tensor. In particular, we identify the most general boundary conditions that, in an initial-boundary value problem, ensure absence of gravitational radiation traversing $\scri$, namely (functional) proportionality between the Cotton-York tensor field and the holographic stress tensor field at $\scri$. We also present novel conditions ensuring the absence of just incoming (outgoing) radiation at $\scri$. These are given in a covariant way and also in terms of standard rescaled Weyl tensor scalars. The results are compatible with any matter content of the physical spacetime.
	\end{abstract}

	\tableofcontents

\section{Introduction}
In the last decade many efforts were devoted to understand how to characterize gravitational radiation at infinity in the presence of a positive cosmological constant $\Lambda$, see \cite{Ashtekar2014,Chrusciel2016,Ashtekar2017,Saw2018i,Fernandez-Alvarez-Senovilla2020b,Fernandez-Alvarez-Senovilla2022b} and references therein. Now, there is a well-established criterion, based on the existence of a flux of tidal energy at infinity, that identifies the presence of gravitational radiation there \cite{Fernandez-Alvarez-Senovilla2020b,Fernandez-Alvarez-Senovilla2022b,Senovilla2022} for $\Lambda\geq 0$. The criterion was proven to be equivalent to the traditional characterization using the news tensor for the case of asymptotically flat spacetimes (vanishing cosmological constant) \cite{Fernandez-Alvarez-Senovilla2020a,Fernandez-Alvarez-Senovilla2022a}. Furthermore, recent application to a very general family of exact solutions representing (pairs of) black holes with $\Lambda >0$ have shown that, according to the criterion, gravitational radiation escapes to infinity if and only if the black holes are accelerated \cite{Fernandez-Alvarez-Podolsky-Senovilla2024} ---a very satisfactory result.

The problem when the cosmological constant is negative has been much less explored. Notable exceptions are \cite{Compere2019}, where an analysis of boundary conditions that keep good properties concerning the Cauchy problem and the limit to vanishing $\Lambda$ was presented, \cite{Poole2019} with an extension of the standard Bondi expansion to the $\Lambda <0$ case, a study of the orientation of principal null directions at infinity \cite{KrtousPodolsky2004}, and the recent \cite{Ciambelli2024} where application of our tidal approach was tried with interesting consequences, despite the fact that actually only necessary conditions were imposed ---as we will explain later in remark \ref{sabrina}. 

The purpose of this paper is to apply the tidal approach to the case of asymptotics with $\Lambda <0$, to derive the necessary and sufficient conditions for the existence of gravitational radiation at infinity (in vacuum and in the presence of matter that decays asymptotically), to distinguish between incoming and outgoing cases, and to provide novel boundary conditions, or holographic data, with a clear interpretation in terms of radiation.

 The tidal approach uses superenergy methods ---see \cite{Senovilla2000} and references therein for full details---  and conformal completions à la Penrose \cite{Penrose63}. The approach relies on the idea that the Weyl curvature is the real gravitational field, being responsible  of the tidal distortions in spacetime. Hence, gravitational radiation transports tidal energy, ergo the existence of such radiation must be identified by a flux of tidal energy. The quantity that measures such a flux for a given observer  is called the super-Poynting vector ---in analogy with the electromagnetic Poynting vector ---, it was introduced in \cite{Bel1958,Bel1962}, and its relevance to characterize intrinsic gravitational radiation was quickly understood and explored long ago \cite{Bel1962,Pirani57,Wheeler77,Zakharov}, see also \cite{Senovilla2000,Alfonso2008}. For a given observer, the super-Poynting vector is given by the commutator of the electric and magnetic parts of the Weyl tensor relative to that observer \cite{Bel1962,Wheeler77,Zakharov,Senovilla2000,Alfonso2008,Fernandez-Alvarez-Senovilla2022b}. 
 
 The question is how to apply these ideas to infinity, as many subtleties arise. To fix the ideas, let us recall the concepts of conformal completion and conformal infinity \cite{Penrose63}. One works in an unphysical spacetime $ M $ with Lorentzian metric $\ct{g}{_{\alpha\beta}}$, that contains the physical spacetime $(\hat{M} , \cth{g}{_{\alpha\beta}})$ ---both of them are assumed to be connected and time oriented. The physical and unphysical metrics are related by a conformal factor $\Omega>0$ on $\hat{M}$:  $\ct{g}{_{\alpha\beta}}=\Omega^2\cth{g}{_{\alpha\beta}}$. Such a {\em conformal completion} $\prn{M,\ct{g}{_{\alpha\beta}}}$ is not unique, as there is the gauge freedom $\Omega \rightarrow \omega \Omega$ with $\omega$ a positive function on $M$; \emph{throughout this work, this gauge is not fixed in any way}. In this conformal picture, infinity is represented by the boundary of $\hat{M}$ in $M$, given by $\Omega =0$ and denoted by $\scri$, which is a hypersurface that, depending on the sign of the cosmological constant $\Lambda$, has a lightlike ($\Lambda$=0), spacelike ($\Lambda>0$) or timelike ($\Lambda<0$) character. The latter is the case that remains to be addressed, and the issue of this article.

 Importantly, the Weyl tensor $\ct{C}{_{\alpha\beta\gamma}^{\delta}}$ vanishes at infinity $\scri$ (i.e., at the conformal boundary) and thus all possible super-Poynting vectors vanish there too. This problem is overcome because one can use the rescaled Weyl tensor,
 	\begin{equation}
 		\ct{d}{_{\alpha\beta\gamma}^{\delta}}\defeq\frac{1}{\Omega}\ct{C}{_{\alpha\beta\gamma}^{\delta}}\ ,
 	\end{equation}
which is regular at $\scri$ and describes the {\em asymptotic} behaviour of the Weyl curvature ---see \cite{Fernandez-Alvarez-Senovilla2022a} for full details on the conventions adopted here. From the space-time viewpoint, this tensor contains the gravitational {\em asymptotic} information, in particular information about the gravitational radiation at infinity. Moreover, as discussed later and also in \cite{Fernandez-Alvarez-Senovilla2022a,Fernandez-Alvarez-Senovilla2022b,Senovilla2022}, such information must be encoded in initial/boundary data and the way to address this question varies depending on the causal character of $\scri$ ---that is, on the sign of $\Lambda$.

 Let $\ct{u}{^{\alpha}}$ denote a non-spacelike, future-pointing, vector field at $\scri$. Then, define the \emph{asymptotic supermomentum} relative to $\ct{u}{^{\alpha}}$ at $\scri$ as
 	\begin{equation}\label{supermomenta}
 		\ct{\P}{^\alpha}\prn{\vec{u}}\defeqs -\ct{u}{^{\rho}}\ct{u}{^{\sigma}}\ct{u}{^{\mu}}\ct{\D}{^{\alpha}_{\rho\sigma\mu}}\ ,
	 	\end{equation}
 where the rescaled Bel-Robinson tensor $\ct{\D}{_{\alpha\beta\gamma\delta}}$ is defined in terms of the rescaled Weyl tensor as
 	\begin{equation}
 			  \ct{\D}{_{\alpha\beta\gamma\delta}}\defeq \ct{d}{_{\alpha\mu\gamma}^\nu}\,\ct{d}{_{\delta\nu\beta}^\mu}
 	     +  \ctr{^*}{d}{_{\alpha\mu\gamma}^\nu}\,\ctr{^*}{d}{_{\delta\nu\beta}^\mu} =
	     \ct{d}{_{\alpha\mu\gamma}^\nu}\,\ct{d}{_{\delta\nu\beta}^\mu}+\ct{d}{_{\alpha\mu\delta}^\nu}\,\ct{d}{_{\gamma\nu\beta}^\mu}-\frac{1}{8}\ct{g}{_{\alpha\beta}}\ct{g}{_{\gamma\delta}} \ct{d}{_{\rho\mu\sigma}^\nu}\,\ct{d}{^{\rho\mu\sigma}_\nu} \, .
 			\end{equation}
			Here, the star operator is the Hodge dual defined by
			$$
			\ctr{^*}{d}{_{\alpha\mu\gamma}^\nu}:=\frac{1}{2}\eta_{\alpha\mu}{}^{\rho\sigma}  \ct{d}{_{\rho\sigma\gamma}^\nu}
			$$
			where $\eta_{\alpha\beta\lambda\mu}$ is the canonical volume element 4-form in $\prn{M,\ct{g}{_{\alpha\beta}}}$.
 The asymptotic supermomentum was introduced in \cite{Fernandez-Alvarez-Senovilla2020a, Fernandez-Alvarez-Senovilla2020b} and also used in \cite{Fernandez-Alvarez-Senovilla2022a,Fernandez-Alvarez-Senovilla2022b} ---see \cite{Senovilla2022} for a review--- for the cases with $\Lambda=0$ and $\Lambda>0$. In those situations, one uses the normal to $\scri$, $\ct{N}{_{\alpha}}\defeq \cd{_{\alpha}}\Omega$, as a canonical choice for $\ct{u}{_{\alpha}}$, as this is a geometrically selected, {\em causal} vector field when $\scri$ is lightlike or spacelike. When $\Lambda <0$, however,  there is not such a canonical choice, because $\scri$ is by itself a 3-dimensional Lorentzian manifold and there is no particular preferred choice for a causal vector field there ---the normal being spacelike. This is another issue that we must address, and the solution is rather simple: we stick to our fundamental idea, that is to say, ``natural observers'' at $\scri$ should not see any flux of tidal energy  if there is no gravitational radiation traversing $\scri$. This simple but powerful idea provides the right answer, which we present in the next section in the form of some criteria to detect the existence of gravitational radiation.
 
In later sections we will show the covariant formulation of the criteria, their relation with the geometry of the rescaled Weyl tensor at $\scri$,
 and the implications for boundary, or holographic, data in asymptotically AdS spacetimes. In particular, we present the most general holographic data compatible with the absence of radiation at $\scri$, given by the {\em functional proportionality} of the Cotton-York tensor field  $\ct{C}{_{\alpha\beta}}$ and the holographic stress tensor field $\ct{D}{_{\alpha\beta}}$ at $\scri$ (see \eqref{eq:n-electric} and \eqref{eq:n-magnetic} for precise definitions), namely
 	\begin{equation}\label{eq:condition-prop-covariant-intro}
 	 				\beta \ct{D}{_{\alpha\beta}}=\gamma \ct{C}{_{\alpha\beta}}\ , \hspace{5mm} \gamma^2+\beta^2 \neq 0\ ,
 		\end{equation}
 where $\gamma$ and $\beta$ are real functions on $\scri$. This criterion is independent of the matter content of the physical spacetime, and it works equally well in the presence or absence of matter, as long as the physical energy-momentum tensor satisfies the following decaying condition\footnote{This is necessary for the conformal completion à la Penrose to be well defined; see \cite{Fernandez-Alvarez-Senovilla2022a} for full details.}
 \begin{equation}\label{T->0}
 \lim_{\Omega \rightarrow 0} \frac{\hat{T}_{\mu\nu} }{\Omega} := \tau_{\mu\nu} \mbox{(finite)}.
\end{equation}
Note that the divergence of $\ct{C}{_{\alpha\beta}}$ vanishes, while the divergence of $\ct{D}{_{\alpha\beta}}$ depends on the leading term $\tau_{\mu\nu}$ of the energy-momentum tensor and, in general, does not vanish. Of course, if $\tau_{\mu\nu}=0$, in particular in vacuum, $D_{\alpha\beta}$ is divergence-free. Remarkably, other conditions presented in the literature \cite{deHaro2008,Mukhopadhyay2013,Ciambelli2018} are particular cases of \cref{eq:condition-prop-covariant-intro} for constant $\gamma$ and $\beta$, and thus for  divergence-free $\ct{D}{_{\alpha\beta}}$, as their goal was to consider vacuum asymptotically AdS spacetimes.
 
Moreover, we give novel holographic data capable of just removing incoming (outgoing) radiation at $\scri$ ---these criteria apply to both general and algebraically special asymptotic geometries, such as solutions studied in the context of the fluid/gravity correspondence \cite{BernardideFreitas2014,GathEtal2015}. The results are fully satisfactory and fit particularly well with the standard formulations of the initial-boundary value problem relevant to asymptotically AdS spacetimes.

\section{Characterisation of gravitational radiation at infinity}
 From now on, we assume that $\Lambda <0$, so that $\scri$ is a timelike hypersurface in the unphysical spacetime. Since the conformal factor $\Omega$ vanishes at $\scri$ and $\Lambda<0$, the unit normal to $\scri$ is a spacelike vector field given by
  	\begin{equation}
  		\ct{n}{_{\alpha}}\defeqs\sqrt{ -\frac{\Lambda}{3}}\ct{N}{_{\alpha}}\ ,
  	\end{equation}
 where $\ct{N}{_{\alpha}}\defeq \cd{_{\alpha}}\Omega$. With this definition, $\ct{n}{^{\alpha}}$ points `inwards' ---i.e., towards the physical spacetime.\\
 
In order to define an asymptotic super-momentum, one faces the problem of the absence of a canonical observer at $\scri$, due to the timelike character of $\scri$ and the spacelike character of $N^\alpha$. This is the most distinguishing feature of the present case with $\Lambda<0$.  There are plenty of observers at $\scri$, and they should be treated on an equal footing\footnote{For further arguments supporting this idea, see the discussion on initial data of \cref{sec:IBVP}.}. 
 Taking this democratic principle into account, and recalling that our fundamental idea is that radiation does not reach infinity if the {\em geometrically selected observers there} do not see any transverse flux of tidal energy, the gravitational radiation criterion can be formulated as follows:
 	\begin{crit}[No gravitational radiation condition with $\Lambda<0$] \label{crit:grav-rad}
	There is no gravitational radiation at $\scri$ if and only if there is no transverse flux of asymptotic superenergy \eqref{supermomenta} for observers within $\scri$. That is,
 	 		\begin{equation}\label{eq:condition}
 	 			\ct{n}{_{\mu}}\ct{\P}{^{\mu}}\prn{\vec{u}}=0\ \ \ \ \ \ \forall\ \ct{u}{^{\alpha}}, \ \ \  \ct{u}{^{\mu}}\ct{u}{_{\mu}}=-1\ , \ \ \ \ct{u}{^{\mu}}\ct{n}{_{\mu}}=0\ .
 	 		\end{equation}	 
 	\end{crit}
  \begin{remark}
  As usual, given the observer $u^\alpha$, one can decompose the corresponding supermomentum like\footnote{Later on we will use bars for complex conjugation, as is customary. The bar on top on the super-Poynting does not have this meaning. Notice the subtle point that this bar is larger.} 
  $$
  \ct{\P}{^{\mu}}\prn{\vec{u}}= W u^\mu + \cts{\P}{^{\mu}}\prn{\vec{u}}
  $$
  where $\cts{\P}{^\alpha}\prn{\vec{u}}$ is the part of $\ct{\P}{^\alpha}$ orthogonal to $\ct{u}{^{\alpha}}$, called the \emph{asymptotic super-Poynting} vector field.
  Observe that $\ct{n}{_{\mu}}\ct{\P}{^{\mu}}\prn{\vec{u}}=0$ is equivalent to $\ct{n}{_{\mu}}\cts{\P}{^{\mu}}\prn{\vec{u}}=0$, so that condition \eqref{eq:condition} can be equivalently formulated with just $\ct{n}{_{\mu}}\cts{\P}{^{\mu}}\prn{\vec{u}}=0$. (For $\Lambda >0$, $\cts{\P}{^{\mu}}\prn{\vec{u}}=0$ is the criterion for absence of radiation with $u^\alpha$ the unit timelike normal to $\scri$, \cite{Fernandez-Alvarez-Senovilla2020b,Fernandez-Alvarez-Senovilla2022b, Senovilla2022}).
  \end{remark}
  \begin{remark}
	  Notice that the criterion is formulated pointwise, but it can be applied to extended regions of $\scri$, such as cuts ---two-dimensional closed spacelike surfaces--- or entire connected regions $\Delta\subset \scri$.
  \end{remark}

  \Cref{crit:grav-rad} determines the absence (presence) of gravitational radiation at $\scri$, as it detects the flux of `tidal energy' \cite{Bel1958,Fernandez-Alvarez-Senovilla2022a} in directions \emph{transversal} to the conformal boundary as measured by observers within that boundary. This flux is caused by the arrival (departure) of gravitational radiation at (from) $\scri$. Indeed, it is possible to further characterise gravitational radiation by  distinguishing incoming (entering  $\hat{M}$ from $\scri$) and outgoing (arriving at $\scri$ from $\hat{M}$) gravitational radiation. For that, a second criterion is presented,
	\begin{crit}[No outgoing (ingoing) gravitational radiation with $\Lambda<0$]\label{crit:no-outgoing-gw}
 	 There is no \emph{outgoing} (\emph{ingoing}) gravitational radiation at $\scri$ if and only if
		\begin{equation}\label{eq:outgoing-condition}
			\ct{n}{_{\mu}}\ct{\P}{^{\mu}}\prn{\vec{u}}\geq 0\ \ \ (\ct{n}{_{\mu}}\ct{\P}{^{\mu}}\prn{\vec{u}}\leq 0) \ \ \ \ \ \forall\ \ct{u}{^{\alpha}}, \ \ \  \ct{u}{^{\mu}}\ct{u}{_{\mu}}=-1\ ,\ct{u}{^{\mu}}\ct{n}{_{\mu}}=0\ .
		\end{equation}	 
	\end{crit}
		\begin{remark}
 		 This tells that there is no flux of superenergy escaping from (entering to) the space-time. Thus, any possible flux is due to incoming (outgoing) gravitational radiation. 
		\end{remark}
		
		\begin{remark}
		Observe that both criteria are independent of the matter contents of the physical spacetime as long as \eqref{T->0} holds.
		\end{remark}
 
\section{Implications and the covariant formulation}\label{sec:implications}
	Criterion \ref{crit:grav-rad} considers \emph{a family} of supermomenta, and this is necessary due to the many different observers that can ``measure'' flux of tidal energy at a timelike $\scri$. As we will see later, this also fits perfectly with the set-up for an initial-boundary value problem that provides $\prn{M,\ct{g}{_{\alpha\beta}}}$. Furthermore, as we are going to prove next, the criterion can be restated in a fully geometric manner using only the rescaled Weyl tensor and the unit normal to $\scri$. To that end, first define at $\scri$ the tensor fields
	 	\begin{align}
	 	    \ct{C}{_{\alpha\beta}} &\defeqs \ct{n}{^{\mu}}\ct{n}{^{\nu}}\ctr{^*}{d}{_{\alpha\mu\beta\nu}} \ ,\label{eq:n-magnetic}\\
	 	    \ct{D}{_{\alpha\beta}} &\defeqs \ct{n}{^{\mu}}\ct{n}{^{\nu}}\ct{d}{_{\alpha\mu\beta\nu}}\ .\label{eq:n-electric}
	 	\end{align}
	The two tensor fields \eqref{eq:n-magnetic} and \eqref{eq:n-electric} are symmetric, traceless, orthogonal to $n^\alpha$, and are univocally and geometrically defined. They determine the rescaled Weyl tensor at $\scri$ by means of (see Appendix \ref{App:decom})
	$$
	d_{\alpha\beta\lambda\mu} =(g_{\alpha\beta\rho\sigma} g_{\lambda\mu\tau\nu} -\eta_{\alpha\beta\rho\sigma} \eta_{\lambda\mu\tau\nu})n^\rho n^\tau D^{\sigma\nu} - (g_{\alpha\beta\rho\sigma} \eta_{\lambda\mu\tau\nu} -\eta_{\alpha\beta\rho\sigma} g_{\lambda\mu\tau\nu})n^\rho n^\tau C^{\sigma\nu} ,
	$$
	where $g_{\alpha\beta\lambda\mu}= g_{\alpha\lambda} g_{\beta\mu} -g_{\alpha\mu} g_{\beta\lambda}$. Then, one has the following theorem
	 	\begin{thm}[No gravitational radiation condition with $\Lambda<0$]\label{thm:linearly-dependent}
	 		Let $\prn{M,\ct{g}{_{\alpha\beta}}}$ be the conformal completion of a physical space-time with negative cosmological constant $\Lambda<0$, with conformal boundary $\scri$ whose unit normal is $\ct{n}{_{\alpha}}$. Also, define the decomposition of the rescaled Weyl tensor $\ct{d}{_{\alpha\beta\gamma}^{\delta}}$ at $\scri$ with respect to $\ct{n}{_{\alpha}}$ as in \cref{eq:n-magnetic,eq:n-electric}. Then, there is no gravitational radiation at $\scri $ if and only if  $\ct{C}{_{\alpha\beta}}$ and $\ct{D}{_{\alpha\beta}}$ are pointwise linearly dependent, i.e., there exist real functions $\beta$ and $\gamma$ on $\scri$ such that
	 			\begin{equation}\label{eq:condition-prop-covariant}
	 				\beta \ct{D}{_{\alpha\beta}}=\gamma \ct{C}{_{\alpha\beta}}\ , \hspace{5mm} \gamma^2+\beta^2 \neq 0\ .
	 			\end{equation}
	 	\end{thm}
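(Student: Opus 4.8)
The plan is to compute the quantity $\ct{n}{_{\mu}}\ct{\P}{^{\mu}}\prn{\vec{u}}$ explicitly in terms of $\ct{C}{_{\alpha\beta}}$, $\ct{D}{_{\alpha\beta}}$ and the observer $\ct{u}{^{\alpha}}$, and then to analyse for which pairs $(\ct{C}{_{\alpha\beta}},\ct{D}{_{\alpha\beta}})$ the resulting expression vanishes for \emph{all} admissible $\ct{u}{^{\alpha}}$. First I would use the decomposition of $\ct{d}{_{\alpha\beta\gamma\delta}}$ at $\scri$ displayed just before the theorem (proved in Appendix \ref{App:decom}) to express the rescaled Bel–Robinson tensor $\ct{\D}{_{\alpha\beta\gamma\delta}}$, and hence $\ct{\P}{^{\mu}}\prn{\vec{u}}=-\ct{u}{^{\rho}}\ct{u}{^{\sigma}}\ct{u}{^{\mu}}\ct{\D}{^{\mu}_{\rho\sigma\mu}}$, entirely through $\ct{C}{_{\alpha\beta}}$ and $\ct{D}{_{\alpha\beta}}$. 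Writing everything in an orthonormal frame adapted to $\scri$ with $\ct{n}{^{\alpha}}$ as the spacelike normal and $\ct{u}{^{\alpha}}$ a unit timelike vector tangent to $\scri$, the two symmetric traceless tensors $\ct{C}{_{\alpha\beta}}$ and $\ct{D}{_{\alpha\beta}}$ live effectively on the 2-plane orthogonal to both $\ct{n}{^{\alpha}}$ and $\ct{u}{^{\alpha}}$ (after projecting), so the computation reduces to $2\times 2$ symmetric traceless matrices. I expect the contraction $\ct{n}{_{\mu}}\ct{\P}{^{\mu}}\prn{\vec{u}}$ to come out proportional to a ``cross product'' of $\ct{C}{_{\alpha\beta}}$ and $\ct{D}{_{\alpha\beta}}$ — schematically $\epsilon^{\mu\nu}\!\bigl(\text{something like }C_{\mu\rho}D_{\nu}{}^{\rho}\bigr)$ contracted with the observer — i.e. an expression that measures the failure of $\ct{C}{_{\alpha\beta}}$ and $\ct{D}{_{\alpha\beta}}$ to commute / be proportional as the observer is varied.

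The implication $(\Leftarrow)$ should then be immediate: if $\beta \ct{D}{_{\alpha\beta}}=\gamma \ct{C}{_{\alpha\beta}}$ with $(\gamma,\beta)\neq(0,0)$, then wherever $\beta\neq 0$ the tensor $\ct{D}{_{\alpha\beta}}$ is a scalar multiple of $\ct{C}{_{\alpha\beta}}$ (and symmetrically where $\gamma\neq 0$), so the antisymmetrised bilinear expression obtained above vanishes identically, giving $\ct{n}{_{\mu}}\ct{\P}{^{\mu}}\prn{\vec{u}}=0$ for every observer. For the converse $(\Rightarrow)$ I would argue pointwise: fix $p\in\scri$ and suppose $\ct{n}{_{\mu}}\ct{\P}{^{\mu}}\prn{\vec{u}}=0$ for all unit timelike $\ct{u}{^{\alpha}}$ tangent to $\scri$ at $p$. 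Using the explicit form, the vanishing-for-all-$\vec u$ condition forces the $2\times2$ symmetric traceless matrices $C$ and $D$ to be simultaneously diagonalisable in a common \emph{orthonormal} eigenbasis of the 2-plane; two traceless symmetric $2\times2$ matrices that are simultaneously orthogonally diagonalisable are necessarily proportional (each such matrix is determined by an eigenvalue and an eigendirection, and sharing the eigendirection forces proportionality of the eigenvalue vectors $(\lambda,-\lambda)$). One must be slightly careful with the degenerate case $C=0$ or $D=0$ at $p$, which is exactly why the statement is phrased with the pair $(\gamma,\beta)\neq(0,0)$ rather than with an honest ratio; and one should check that the possible boost of $\ct{u}{^{\alpha}}$ within $\scri$ does not add constraints beyond those coming from rotations in the transverse 2-plane. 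Collecting the pointwise statement over $\scri$, and noting that the functions $\gamma,\beta$ can be chosen measurably (indeed smoothly away from the common-zero set, e.g. $\gamma=\langle C,C\rangle$, $\beta=\langle C,D\rangle$ or similar normalisations), yields \eqref{eq:condition-prop-covariant}.

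The main obstacle I anticipate is purely algebraic but genuinely the crux: showing that ``$\ct{n}{_{\mu}}\ct{\P}{^{\mu}}\prn{\vec{u}}=0$ for every admissible observer $\ct{u}{^{\alpha}}$'' is equivalent to pointwise linear dependence of $\ct{C}{_{\alpha\beta}}$ and $\ct{D}{_{\alpha\beta}}$, rather than merely to, say, $[\ct{C}{},\ct{D}{}]=0$ as matrices (which for generic $2\times2$ traceless symmetric matrices already gives proportionality, but one must verify no weaker conclusion sneaks in, and must handle the rank-deficient cases). Concretely, one has three real unknowns in the observer direction modulo the stabiliser — effectively a one-parameter family of rotation angles in the transverse 2-plane plus the boost — and one needs the vanishing of a trigonometric polynomial in that angle to kill all its Fourier coefficients; identifying those coefficients with the independent components of the ``commutator'' of $\ct{C}{}$ and $\ct{D}{}$ is the step that must be done carefully. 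Everything else — the Bel–Robinson algebra, the frame decomposition, and the $(\Leftarrow)$ direction — is routine once the decomposition formula for $\ct{d}{_{\alpha\beta\gamma\delta}}$ from Appendix \ref{App:decom} is in hand.
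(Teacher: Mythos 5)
There is a genuine gap, and it sits exactly at the step you yourself identify as the crux. Your reduction to $2\times2$ matrices is not valid: $\ct{C}{_{\alpha\beta}}$ and $\ct{D}{_{\alpha\beta}}$ are symmetric, traceless tensors orthogonal to $\ct{n}{^{\alpha}}$ only, i.e.\ they live on the full 3-dimensional \emph{Lorentzian} tangent space of $\scri$ (five independent components each, accounting for the ten components of $\ct{d}{_{\alpha\beta\gamma\delta}}$); they are not orthogonal to the observer $\ct{u}{^{\alpha}}$, so they do not project to traceless symmetric matrices on the 2-plane orthogonal to $\ct{n}{^{\alpha}}$ and $\ct{u}{^{\alpha}}$ without losing information. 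The correct computation (the paper's route, via the orthogonal splitting of $\ct{\D}{_{\alpha\beta\gamma\delta}}$ with respect to the spacelike $\ct{n}{^{\alpha}}$) gives $\ct{n}{_{\mu}}\ct{\P}{^{\mu}}\prn{\vec{u}}=\ct{\Q}{_{\alpha\beta\gamma}}\ct{u}{^{\alpha}}\ct{u}{^{\beta}}\ct{u}{^{\gamma}}$, a \emph{cubic} form in $\ct{u}{^{\alpha}}$ built from the fully symmetric 3-index tensor $\Q_{\lambda\mu\nu}=-\Q_\lambda h_{\mu\nu}+2\epsilon_{\lambda}{}^{\rho\sigma}\prn{D_{\mu\sigma}C_{\nu\rho}+D_{\nu\sigma}C_{\mu\rho}}$, which involves all components of $\ct{C}{}$ and $\ct{D}{}$, not a quadratic ``cross product'' living in a transverse 2-plane. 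Varying $\ct{u}{^{\alpha}}$ over all timelike directions tangent to $\scri$ (boosts included — they are essential, not a side check) forces $\Q_{\alpha\beta\gamma}=0$, and only then does one get the algebraic condition $D_{\mu[\sigma}C_{\rho]\nu}+D_{\nu[\sigma}C_{\rho]\mu}=0$, whose equivalence with linear dependence requires a genuine 3-dimensional case analysis.

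The danger of the $2\times2$/commutator picture is not hypothetical: in three dimensions, commuting traceless symmetric matrices need not be proportional (e.g.\ $\mathrm{diag}(1,-1,0)$ and $\mathrm{diag}(0,1,-1)$), and in Lorentzian signature symmetric tensors need not be diagonalisable at all, so your ``simultaneous orthonormal eigenbasis'' step breaks down precisely in the physically relevant degenerate cases. The paper's examples \eqref{example} and \eqref{example1}, with $D_{\alpha\beta}\propto L_\alpha L_\beta$ and $C_{\alpha\beta}\propto L_{(\alpha}p_{\beta)}$ for a null $L^\alpha$, have vanishing commutator part $\Q_\mu=0$ yet carry radiation (this is the content of remark \ref{sabrina}); an argument that only extracts a commutator-type condition from the observer-averaged vanishing would wrongly declare these radiation-free. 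Your $(\Leftarrow)$ direction is fine once the correct expression for $\ct{n}{_{\mu}}\ct{\P}{^{\mu}}\prn{\vec{u}}$ is in hand, but the $(\Rightarrow)$ direction as sketched does not go through; it must be redone with the full 3-index tensor $\Q_{\alpha\beta\gamma}$ and a Lorentzian-aware proof that its vanishing is equivalent to pointwise proportionality of $\ct{C}{_{\alpha\beta}}$ and $\ct{D}{_{\alpha\beta}}$.
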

	 	\begin{proof}	
	 	 From the formula \eqref{split} in Appendix \ref{App:decom} we know that the orthogonal splitting of $ \ct{\D}{_{\alpha\beta\gamma\delta}}$ with respect to $\ct{n}{_{\alpha}}$ reads
		 \begin{equation}\label{Dsplit}
		 \ct{\D}{_{\alpha\beta\gamma\delta}}= \W \,  n_\alpha n_\beta n_\gamma n_\delta +4 \Q_{(\alpha} n_\beta n_\gamma n_{\delta)}+6t_{(\alpha\beta}n_\gamma n_{\delta)}+ 4 \Q_{(\alpha\beta\gamma} n_{\delta)} + t_{\alpha\beta\gamma\delta}\ ,
		 \end{equation}
where, in particular (formulas \eqref{q}, \eqref{Q} and \eqref{projector})
\begin{eqnarray}
\Q_{\mu} \defeq \ct{\D}{_{\alpha\beta\gamma\rho}} n^\alpha n^\beta n^\gamma h^\rho_\mu  = 2\epsilon_{\mu\rho\sigma} C^{\tau \rho} D^\sigma{}_\tau, \hspace{3mm} \Q_{\gamma}=-\Q^\rho{}_{\rho\gamma}, \hspace{3mm} n^\alpha \Q_{\alpha} =0, \label{Qa}\\
\Q_{\lambda\mu\nu} \defeq \ct{\D}{_{\alpha\beta\gamma\delta}} n^\alpha h^\beta{}_\lambda h^\gamma{}_\mu h^\delta{}_\nu = -\Q_\lambda h_{\mu\nu} +2\epsilon_{\lambda}{}^{\rho\sigma} (D_{\mu\sigma} C_{\nu\rho} +D_{\nu\sigma} C_{\mu\rho} ),\label{Qabc} \\
Q_{\lambda\mu\nu}=\Q_{(\lambda\mu\nu)}, \hspace{2mm} n^\nu \Q_{\lambda\mu\nu}=0.
\end{eqnarray}
Let $u^\alpha$ be any unit timelike vector tangent to $\scri$, that is, $u^\alpha n_{\alpha} =0$. Then, from \eqref{Dsplit} one readily gets
$$
\ct{\P}{^{\mu}}\prn{\vec{u}} =\Q_{\alpha\beta\gamma} u^\alpha u^\beta u^\gamma n^\mu + u^\alpha u^\beta u^\gamma t_{\alpha\beta\gamma}{}^\mu 
$$
and, given that $t_{\alpha\beta\gamma\delta}$ is fully orthogonal to $n^\alpha$, the criterion condition $	\ct{n}{_{\mu}}\ct{\P}{^{\mu}}\prn{\vec{u}} =0$ becomes simply
	 		\begin{equation}\label{eq:cond-Q}
	 			\ct{\Q}{_{\alpha\beta\gamma}}\ct{u}{^{\alpha}}\ct{u}{^{\beta}}\ct{u}{^{\gamma}}=0 
	 		\end{equation}
for all unit timelike $u^\alpha$ orthogonal to $n^\alpha$. But $\ct{Q}{_{\alpha\beta\gamma}}$ is fully symmetric and orthogonal to $n^\alpha$ too, so that 
		\cref{eq:cond-Q} is equivalent to 
		$$\ct{\Q}{_{\alpha\beta\gamma}}=0.$$ 
From the second in \eqref{Qa} this entails also $\Q_\alpha=0$. 
Taking this into account \eqref{Qabc} then leads to
$$
\epsilon_{\lambda}{}^{\rho\sigma} (D_{\mu\sigma} C_{\nu\rho} +D_{\nu\sigma} C_{\mu\rho} )=0\ ,
$$
or equivalently 
$$
D_{\mu[\sigma} C_{\rho]\nu} +D_{\nu[\sigma} C_{\rho]\mu} =0 \ ,
$$
which can only happen if and only if \eqref{eq:condition-prop-covariant} holds, as can be seen by projecting to an orthonormal basis and considering all possible different cases.
	 	\end{proof} 
\begin{remark}
The condition \eqref{eq:condition-prop-covariant} includes in particular the case with $C_{\alpha\beta}=0$ (for $\beta=0$) and the case with $D_{\alpha\beta}=0$ (for $\gamma =0$). The trivial case with $C_{\alpha\beta}=D_{\alpha\beta}=0$ arises when the rescaled Weyl tensor vanishes and obviously no radiation is present.
\end{remark}
\begin{remark}\label{sabrina}
In a recent paper \cite{Ciambelli2024}, $\Q_\alpha=0$ was proposed as a non-radiation condition at $\scri$ with negative $\Lambda$. This was based on the analogy with the criterion for no radiation that we presented in \cite{Fernandez-Alvarez-Senovilla2022b,Fernandez-Alvarez-Senovilla2020b} for the case with positive $\Lambda$, but without taking into account the reasons behind that criterion: vanishing of the physical flux of tidal energy as seen by the privileged observer defined by $\scri$. As we have just seen the condition $\Q_\alpha =0$ is only necessary in the case of a negative $\Lambda$ and one needs the stronger requirement of $\Q_{\alpha\beta\gamma}=0$, or equivalently the  proportionality shown in \eqref{eq:condition-prop-covariant}. This is based on the same physical criteria: vanishing of the transverse tidal energy flux as seen by observers geometrically defined by $\scri$.
\end{remark}
\begin{remark}\label{remarkgradient}
 Given that $C_{\alpha\beta}$ is divergence free, condition \eqref{eq:condition-prop-covariant} implies that in the absence of radiation (here ${\cal D}$ is the covariant derivative within $\scri$)
$$
{\cal D}^\alpha\left(\beta/\gamma\right) D_{\alpha\beta} + \beta/\gamma \, {\cal D}^\alpha D_{\alpha\beta} =0 .
$$
This restricts the asymptotic behaviour of the physical energy-momentum tensor \eqref{T->0}, which is related to ${\cal D}^\alpha D_{\alpha\beta} $. If $\beta/\gamma$ is constant, $D_{\alpha\beta}$ is divergence-free, compatible with vacuum. If on the other hand the physical spacetime is vacuum near $\scri$, the previous relation only requires that the gradient of $\beta/\gamma$ be an eigenvector field of $D_{\alpha\beta}$ ---and thus also of $C_{\alpha\beta}$--- with vanishing eigenvalue.

\end{remark}
\begin{remark}
Conditions considered in \cite{Mukhopadhyay2013} are particular cases of \cref{eq:condition-prop-covariant} with $\gamma/\beta$ constant, ergo having a divergence-free $\ct{D}{_{\alpha\beta}}$, and were used to reconstruct the bulk geometry as the stationary external region of black-hole vacuum solutions. This as well as other studies on AdS/CFT building the physical spacetime such as  \cite{GathEtal2015} and \cite{Hubeny2009} seem to agree with our criterion.
\end{remark}

Notice that, by using the same calculation leading to \cref{eq:cond-Q}, Criteria \ref{crit:no-outgoing-gw} require that the tensor $\Q_{\alpha\beta\gamma}$, which is tangent to $\scri$, has the property 
			\begin{equation}\label{property}
			\Q_{\alpha\beta\gamma} u^\alpha u^\beta u^\gamma \geq 0 \hspace{1cm} (\Q_{\alpha\beta\gamma} u^\alpha u^\beta u^\gamma \leq 0 )
			\end{equation}
			for all timelike vectors $u^\alpha$. This brings to mind the weak energy condition on the energy-momentum tensor, but now for a fully symmetric and traceless 3-index tensor. 
			As $\Q_{\alpha\beta\gamma} $ is fully symmetric and tangent to $\scri$, it is enough, for example, that it is proportional to the tensor product of  three copies of the same {\em null vector} field. This can be achieved by choosing the tensor fields $C_{\alpha\beta}$ and $D_{\alpha\beta}$ judiciously. An explicit example is given by
			\begin{equation}\label{example}
			D_{\alpha\beta} =X L_\alpha L_\beta, \hspace{1cm} C_{\alpha\beta} =Z \left( L_\alpha p_\beta +L_\beta p_\alpha\right)
			\end{equation} 
where $L^\alpha$ is a null future-pointing vector field within $\scri$ and $p^\alpha$ is a unit spacelike vector field tangent to $\scri$ and orthognal to $L^\alpha$. $X$ and $Z$ are functions on $\scri$. By using formulas \eqref{Qa} and \eqref{Qabc} it is easy to get then
\begin{equation}\label{Q=LLL}
\Q_{\alpha\beta\gamma} = 4 X Z L_\alpha L_\beta L_\gamma\ ,
\end{equation}
which obviously satisfies the property \eqref{property} as long as $XZ$ is a non-positive (or non-negative) function on $\scri$. An alternative choice leading to the very same expression \eqref{Q=LLL} is obviously
\begin{equation}\label{example1}
			C_{\alpha\beta} = X L_\alpha L_\beta, \hspace{1cm} D_{\alpha\beta} =Z \left( L_\alpha p_\beta +L_\beta p_\alpha\right).
			\end{equation} 
Of course, as explained before in remark \ref{remarkgradient}, these tensor fields $C_{\alpha\beta}$ and $D_{\alpha\beta}$ are subject to some differential conditions on $\scri$, namely, $C_{\alpha\beta}$ is divergence-free and the divergence of $D_{\alpha\beta}$ depends on the asymptotic behaviour of the energy-momentum tensor of the physical spacetime. In principle this will not be an obstacle to have conditions such as \eqref{example} or \eqref{example1}. For instance, in the latter case the divergence-free property of $C_{\alpha\beta}$ will require that $L^\alpha$ be a geodesic vector field. This can be achieved by simply selecting a null $L^\alpha$ at an initial 2-surface and then extend it uniquely by the geodesic condition. Concerning the form of $D_{\alpha\beta}$, it will depend on the matter content and its properties, but it will generically lead to some differential equations for $Z$ and $p^\alpha$ within $\scri$ that can be satisfied.

We would like to stress now,  in relation to  remark \ref{sabrina}, that these two examples \eqref{example} and \eqref{example1} have in particular $\Q_\mu=0$, that is, the condition advocated in \cite{Ciambelli2024}, but as we see here they actually do have radiation, either ingoing or outgoing ---or even both if the function $XZ$ changes sign.

\subsection{The criterion in terms of rescaled Weyl scalars}\label{rmk:weyl-scalars}
	  	  The no-radiation condition \eqref{eq:condition} can be expressed in terms of rescaled Weyl scalars $\phi_i$ ($i=0,1,2,3,4$), i.e., the five complex components of the rescaled Weyl tensor $\ct{d}{_{\alpha\beta\gamma}^{\delta}}$ in a null tetrad 
		  $$\{\ell^\mu,k^\mu,m^\mu,\bar{m}^\mu\}$$ 
with the typical non-zero scalar products $\ell^\mu k_\mu =-m^\mu \bar{m}_\mu =-1$ (see e.g. \cite{Stephani2003}, the bar denotes complex conjugate). The simplest expression follows by choosing a tetrad with real null directions\footnote{Here and everywhere else, we will always implicitly assume that all null and timelike vectors are future oriented.} defined as
	  	  	\begin{align}
	  	  	    \ct{\ell}{^\alpha} &\defeq \frac{1}{\sqrt{2}}\prn{\ct{u}{^{\alpha}}-\ct{n}{^{\alpha}}}\ ,\label{eq:ellcoplanar} \\
	  	  	    \ct{k}{^\alpha} & \defeq \frac{1}{\sqrt{2}}\prn{\ct{u}{^{\alpha}}+\ct{n}{^{\alpha}}}\ .\label{eq:kcoplanar}
	  	  	\end{align}
	  	  Then,  \cref{eq:condition} (no gravitational radiation) is equivalent to
	  	  	\begin{equation}\label{eq:rad-cond-tetrad-coplanar}
		  	  	\phi_0\bar{\phi}_{0}-\phi_4\bar{\phi}_{4}+2\prn{\phi_1\bar{\phi}_{1}-\phi_3\bar{\phi}_{3}}=0\ \ \ \ \ \ 
				\forall\ \ct{u}{^{\alpha}}, \ \ \  \ct{u}{^{\mu}}\ct{u}{_{\mu}}=-1\ ,\ct{u}{^{\mu}}\ct{n}{_{\mu}}=0\ .
	  	  	\end{equation}
		  	 Observe that the $\phi$'s appearing in \cref{eq:rad-cond-tetrad-coplanar} depend on the unit, future-pointing and timelike $\ct{u}{^{\alpha}}$. By changing the $u^\alpha$ many different conditions of the same type can be found and then stricter restrictions on the re-scaled Weyl tensor arise. In particular, we can prove that the above quadratic properties \eqref{eq:rad-cond-tetrad-coplanar} are equivalent to a set of linear relations when considered valid for all $u^\alpha$ tangent to $\scri$. To prove that, note that condition \eqref{eq:condition-prop-covariant} reads
\begin{equation}\label{eq:condition-propto}
\beta \left(\ct{k}{^\mu}\ct{k}{^{\nu}}\ct{d}{_{\alpha\mu\beta\nu}}-\ct{\ell}{^\mu}\ct{k}{^{\nu}}\ct{d}{_{\alpha\mu\beta\nu}}- \ct{k}{^\mu}\ct{\ell}{^{\nu}}\ct{d}{_{\alpha\mu\beta\nu}}+\ct{\ell}{^\mu}\ct{\ell}{^{\nu}}\ct{d}{_{\alpha\mu\beta\nu}}\  \right) =\gamma\epsilon_{\alpha\rho\sigma} 
\ct{d}{^{\rho\sigma}_{\beta\nu}} (k^\nu -\ell^\nu)/\sqrt{2}
\end{equation} 
where $\epsilon_{\alpha\mu\nu}$ is defined in formula \eqref{neta} of the Appendix. Contracting here with $k^\beta$ and $(k^\alpha+\ell^\alpha)$ one readily gets 
$$
\beta\,  \ct{k}{^\mu}\ct{\ell}{^{\alpha}}\ct{k}{^{\nu}}\ct{\ell}{^{\beta}}\ct{d}{_{\alpha\mu\beta\nu}} =-i\, \gamma m^\alpha \bar{m}^\mu \ell^\beta k^\nu \ct{d}{_{\alpha\mu\beta\nu}}
$$
or, using the standard notation for the rescaled Weyl-tensor scalars \cite{Stephani2003} 
\begin{equation}\label{lin0}
\beta {\rm Re} (\phi_2)=- \gamma {\rm Im}(\phi_2) .
\end{equation}
Similarly, contracting \eqref{eq:condition-propto} with $k^\beta$ and $m^\alpha$ one gets
$$
\beta \left(-m^\alpha \ct{k}{^\mu} k^\beta \ct{\ell}{^{\nu}}\ct{d}{_{\alpha\mu\beta\nu}} + m^\alpha \ct{\ell}{^\mu} k^\beta \ct{\ell}{^{\nu}}\ct{d}{_{\alpha\mu\beta\nu}}\right) =-i \gamma \left(m^\alpha \ct{k}{^\mu} k^\beta \ct{\ell}{^{\nu}}\ct{d}{_{\alpha\mu\beta\nu}} + m^\alpha \ct{\ell}{^\mu} k^\beta \ct{\ell}{^{\nu}}\ct{d}{_{\alpha\mu\beta\nu}}\right)
$$
or in the scalar notation
\begin{equation}\label{lin2}
\beta(\phi_1 +\bar\phi_3) =i \gamma (\phi_1 -\bar\phi_3) .
\end{equation}
Finally, contracting \eqref{eq:condition-propto} with $m^\beta$ and $m^\alpha$ one obtains
$$
\beta \left(-m^\alpha \ct{k}{^\mu} m^\beta \ct{k}{^{\nu}}\ct{d}{_{\alpha\mu\beta\nu}} + m^\alpha \ct{\ell}{^\mu} m^\beta \ct{\ell}{^{\nu}}\ct{d}{_{\alpha\mu\beta\nu}}\right) =i \gamma \left(m^\alpha \ct{k}{^\mu} m^\beta \ct{k}{^{\nu}}\ct{d}{_{\alpha\mu\beta\nu}} - m^\alpha \ct{\ell}{^\mu} m^\beta \ct{\ell}{^{\nu}}\ct{d}{_{\alpha\mu\beta\nu}}\right)
$$
or equivalently
\begin{equation}\label{lin1}
	\beta ( \phi_0+ \bar\phi_4) = \gamma i (\phi_0 -\bar\phi_4) . 
	\end{equation}
The contraction of \eqref{eq:condition-propto} with $\bar m^\beta$ and $m^\alpha$ leads again to \eqref{lin0}. 

Relations \eqref{lin0}, \eqref{lin1} and \eqref{lin2} lead immediately to the following set of linear conditions		 
			 \begin{eqnarray}
			 \alpha\bar\phi_4 = -\bar\alpha \phi_0 , \label{linear1}\\
			  \alpha \bar\phi_1 =-\bar\alpha \phi_3, \label{linear}\\
			 \alpha \bar\phi_2 =-\bar\alpha \phi_2 .\label{linear2}
			 \end{eqnarray}
			 where $\alpha := \beta +i\gamma\neq 0$ is non-vanishing complex function on $\scri$, with $\beta,\gamma$ given in \eqref{eq:condition-prop-covariant}. It is remarkable that $\phi_2$ does not arise explicitly in \eqref{eq:rad-cond-tetrad-coplanar} and yet it is restricted due to the validity of \eqref{eq:rad-cond-tetrad-coplanar} for all unit timelike vectors tangent to $\scri$. 

The version of conditions \eqref{example} and \eqref{example1} using rescaled Weyl scalars is discussed in section \ref{sec:IBVP} on boundary conditions at $\scri$.

\section{Restrictions on PNDs at $\scri$}
 	  	 	 To derive the restrictions that absence of radiation impose in terms of the Principal Null Directions (PND) of the rescaled Weyl tensor an explicit expression for the left-hand side of \cref{eq:condition} can be formulated. The derivation of such formula is long and the details can be found in \cref{sec:appendix}. Here the final result is presented:
 	 	\begin{align}
 	 		\ct{n}{_{\mu}}\ct{\P}{^\mu}\prn{\vec{u}}=4\bar{\ubar{\phi}_3}\ubar{\phi}_3&\Bigg\lbrace B_1\brkt{\ubar{a}\breve{f}\bar{\breve{f}}\prn{\frac{1}{3}\hat{\ubar{a}}\hat{\ubar{b}}+\hat{\ubar{d}}\bar{\hat{\ubar{d}}}}+\ubar{b}\prn{\frac{1}{3}\breve{\ubar{a}}\breve{\ubar{b}}+\breve{\ubar{d}}\bar{\breve{\ubar{d}}}}+\ubar{\breve{b}}\prn{\frac{1}{3}{\ubar{a}}{\ubar{b}}+{\ubar{d}}\bar{{\ubar{d}}}}}\nonumber\\
 	 		&+B_2\breve{c}\bar{\breve{c}}\brkt{\ubar{a}\tilde{f}\bar{\tilde{f}}\prn{\frac{1}{3}{\ubar{a}^\prime}{\ubar{b}^\prime}+{\ubar{d^\prime}}\bar{{\ubar{d}^\prime}}}+\ubar{b}\prn{\frac{1}{3}\tilde{\ubar{a}}\tilde{\ubar{b}}+\tilde{\ubar{d}}\bar{\tilde{\ubar{d}}}}+\ubar{\tilde{b}}\prn{\frac{1}{3}{\ubar{a}}{\ubar{b}}+{\ubar{d}}\bar{{\ubar{d}}}}}\Bigg\rbrace\nonumber\\
 	 		+4\phi_{3}\bar{\phi_{3}}&\Bigg\lbrace\tilde{B}_{3}\brkt{{a}\breve{c}\bar{\breve{c}}\prn{\frac{1}{3}\hat{{a}}\hat{{b}}+\hat{{d}}\bar{\hat{{d}}}}+{b}\prn{\frac{1}{3}\breve{{a}}\breve{{b}}+\breve{{d}}\bar{\breve{{d}}}}+{\breve{b}}\prn{\frac{1}{3}{{a}}{{b}}+{{d}}\bar{{{d}}}}}\nonumber\\
 	 		&+\breve{B}_4\brkt{{a}\tilde{c}\bar{\tilde{c}}\prn{\frac{1}{3}{{a}^\prime}{{b}^\prime}+{{d^\prime}}\bar{{{d}^\prime}}}+{b}\prn{\frac{1}{3}\tilde{{a}}\tilde{{b}}+\tilde{{d}}\bar{\tilde{{d}}}}+{\tilde{b}}\prn{\frac{1}{3}{{a}}{{b}}+{{d}}\bar{{{d}}}}}\Bigg\rbrace\ .\label{eq:formula270}
 	 	\end{align}
 	
		 \begin{figure}[h!]
 	 	 	 \centering
 	 	 	 \includegraphics[scale=1]{./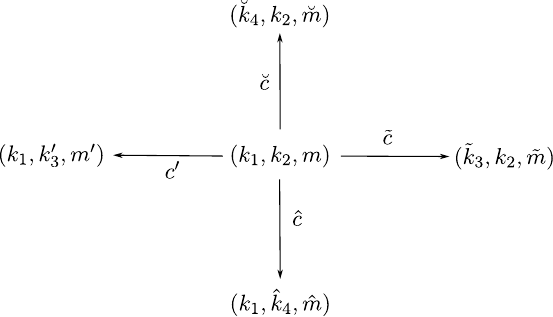}
 	 	 	 \caption{{\small Schematic representation of the first set of tetrads and their transformations used to derive \cref{eq:formula270}. The null directions, labelled with $i=1,2,3,4$, are aligned with the four PNDs of the rescaled Weyl tensor $(k_1,k_2,k_3,k_4)$. The starting basis is $\prn{{k}{_{1}^\alpha},{k}{_{2}^\alpha}, \ct{m}{^\alpha}}$, and the rest of the tetrads are obtained by doing a null rotation along $k_1^\alpha$ (with complex coefficients $\hat{c}$ and $c^\prime$) or $k_2^\alpha$ (with complex coefficientes $\tilde{c}$ and $\breve{c}$). The different decorations indicate relative normalisations (e.g., $k^\prime_3$ and $\tilde{k}_{3}$ are proportional and aligned with the PND $k_3$). See \cref{sec:appendix} for full expressions.}}\label{fig:set1}
 	  	 \end{figure}
		 
 	  	 \begin{figure}[h!]
 	 	 	 \centering
 	 	 	 \includegraphics[scale=1]{./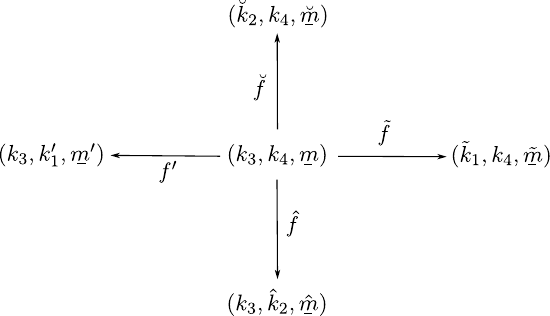}
 	 	 	 \caption{{\small Similarly to \cref{fig:set1}, this represents a second set of null tetrads. In this case, the starting point is $\prn{{k}{_{3}^\alpha},{k}{_{4}^\alpha}, \underset{\bar{}}{m}^\alpha}$ and $f^\prime$, $\hat{f}$, $\tilde{f}$, $\breve{f}$ are the complex coefficients of the different null rotations. The bar appearing below some of the complex vectors in the tetrad is used just to distinguish them from those of the first set in \cref{fig:set1} (that is, to distinguish $\underset{\bar{}}{m}^\alpha$   from $\ct{m}{^\alpha}$).}}\label{fig:set2}
 	  	 \end{figure}
		 
		  All the scalars appearing in this equation are related to one of the null tetrads described in \cref{fig:set1,fig:set2}. In particular, $\ubar{\phi}_3$ and $\phi_3$ are the corresponding rescaled Weyl scalars in the basis defined by  $\prn{k^\alpha_3,k^\alpha_4}$ and $\prn{k^\alpha_{1},k^\alpha_{2}}$, respectively, and $k^\alpha_{i}$ ($i=1,2,3,4$) are the PND of the rescaled Weyl tensor (at $\scri$).  The real $a$, $b$ and complex $d$ coefficients and their decorated versions are the components of the generic unit timelike vector field $\ct{u}{^\alpha}$ ---with $u^\alpha n_\alpha =0$--- in each of the bases appearing in \cref{fig:set1,fig:set2}. In particular, $a$, $b$ and their decorated versions are always strictly positive. For instance, one has:
		  \begin{align}
		     \ct{u}{^\alpha}&=a{k}{_1^{\alpha}}+b{k}{_2^\alpha}+d\ct{m}{^\alpha}+\bar{d}\ct{\bar{m}}{^\alpha}\nonumber\\	
		       & = \tilde{a}\tilde{k}_{3}^\alpha+\tilde{b}k_{2}^\alpha+\tilde{d}\tilde{m}^\alpha+\bar{\tilde{d}}\bar{\tilde{m}}^\alpha	\nonumber\\
		       & = \ubar{a}{k}_{3}^\alpha+\ubar{b}k_{4}^\alpha+\ubar{d}\ubar{m}^\alpha+\bar{\ubar{d}}\bar{\ubar{m}}^\alpha\ .
		  \end{align}
		  
		   Detailed relations and definitions are found in \cref{sec:appendix}, in particular there are many simplifications if the different Petrov types of the re-scaled Weyl tensor are special, as then two or more of the PNDs coincide. These simplifications in algebraically special cases are considered at large in the appendix.  		

 The most important feature of \cref{eq:formula270} is that all terms are manifestly positive, except for $B_1$, $B_2$, $\tilde{B}_{3}$ and $\breve{B}_{4}$, defined by 
 	 	\begin{equation}
 	 	B_1\defeq {k}{^\mu_{1}}\ct{n}{_{\mu}}\ ,\quad B_2\defeq {k}{^\mu_{2}}\ct{n}{_{\mu}}\ ,\quad \tilde{B}_3\defeq {\tilde{k}}{^\mu_{3}}\ct{n}{_{\mu}}\ ,\quad \breve{B}_4\defeq \breve{k}{^\mu_{4}}\ct{n}{_{\mu}}\ ,
 	 	\end{equation}
		which can have any sign depending on the \emph{relative orientation} between the unit normal $\ct{n}{_{\alpha}}$ and the PNDs.
		
		First of all, we prove an important result showing that, {\em in absence of radiation} , the unit normal and the PNDs aligned themselves in very specific geometrical ways.
\begin{lemma}\label{lemma:pnd-pair-scri}
Let $\prn{M,\ct{g}{_{\alpha\beta}}}$ be the conformal completion of a physical space-time with negative cosmological constant $\Lambda<0$, with conformal boundary $\scri$ whose unit normal is $\ct{n}{_{\alpha}}$. Then, if condition \eqref{eq:condition} holds,  principal null directions that are not tangent to $\scri$ always come in pairs, and $\ct{n}{_{\alpha}}$ is contained in the plane they define.
\end{lemma}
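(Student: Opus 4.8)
The plan is to sidestep the lengthy expression \eqref{eq:formula270} and instead deduce the statement from the linear relations on the rescaled Weyl scalars obtained in \cref{rmk:weyl-scalars}. First I would record the starting observation: assuming \eqref{eq:condition} (no radiation), \cref{thm:linearly-dependent} yields the covariant identity \eqref{eq:condition-prop-covariant}, and since that holds tensorially on $\scri$, the algebra leading from \eqref{eq:condition-propto} to \eqref{linear1}--\eqref{linear2} can be rerun verbatim in \emph{any} null tetrad of the coplanar form \eqref{eq:ellcoplanar}--\eqref{eq:kcoplanar}, i.e.\ one built from an arbitrary future-pointing unit timelike $\ct{u}{^\alpha}$ with $\ct{u}{^\mu}\ct{n}{_\mu}=0$, and always with the same, $u$-independent $\alpha=\beta+i\gamma\neq 0$ (because $\beta:\gamma$ is fixed pointwise by $\ct{C}{_{\alpha\beta}}$ and $\ct{D}{_{\alpha\beta}}$, which do not involve $\ct{u}{^\alpha}$). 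In particular \eqref{linear1} forces $\abs{\phi_0}=\abs{\phi_4}$ in every coplanar tetrad, so there $\phi_0=0\iff\phi_4=0$; equivalently, $\ct{\ell}{^\alpha}$ is a principal null direction (PND) of the rescaled Weyl tensor exactly when $\ct{k}{^\alpha}$ is.

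Next I would carry out a purely geometric construction. Let $\kappa^\alpha$ be a PND at a point of $\scri$ that is \emph{not} tangent to $\scri$, i.e.\ $\kappa^\mu\ct{n}{_\mu}\neq 0$. The $2$-plane $\Pi:=\mathrm{span}(\kappa^\alpha,\ct{n}{^\alpha})$ has Gram determinant $(\kappa\!\cdot\!\kappa)(n\!\cdot\!n)-(\kappa\!\cdot\!n)^2=-(\kappa\!\cdot\!n)^2<0$, so it is Lorentzian and carries exactly two distinct null directions, one of them $\kappa^\alpha$; call the other $\kappa'^\alpha$. Then $v^\alpha:=\kappa^\alpha-(\kappa\!\cdot\!n)\,\ct{n}{^\alpha}$ satisfies $v^\mu\ct{n}{_\mu}=0$, $v^\mu v_\mu=-(\kappa\!\cdot\!n)^2<0$ and $v^\mu\kappa_\mu<0$ (so $v^\alpha$ lies in the same time cone as the future-pointing $\kappa^\alpha$), hence $\ct{u}{^\alpha}:=v^\alpha/\abs{\kappa\!\cdot\!n}$ is a future-pointing unit timelike vector orthogonal to $\ct{n}{^\alpha}$ with $\kappa^\alpha\in\mathrm{span}(\ct{u}{^\alpha},\ct{n}{^\alpha})=\Pi$. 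Building the coplanar tetrad \eqref{eq:ellcoplanar}--\eqref{eq:kcoplanar} from this $\ct{u}{^\alpha}$, the real null legs $\ct{\ell}{^\alpha},\ct{k}{^\alpha}$ are exactly the two null directions of $\Pi$, so $\kappa^\alpha$ is a positive multiple of one of them and $\kappa'^\alpha$ of the other.

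To finish, I would argue as follows: say $\kappa^\alpha\propto\ct{k}{^\alpha}$ (the case $\kappa^\alpha\propto\ct{\ell}{^\alpha}$ is identical). Then $\phi_4=0$, so by the first step $\phi_0=0$, whence $\ct{\ell}{^\alpha}$ is a PND too; that is, $\kappa'^\alpha$ is a PND. Since $\kappa'^\alpha$ is a positive multiple of $\ct{\ell}{^\alpha}$ (or of $\ct{k}{^\alpha}$) and $\ct{\ell}{^\mu}\ct{n}{_\mu}=-\tfrac1{\sqrt2}$, $\ct{k}{^\mu}\ct{n}{_\mu}=\tfrac1{\sqrt2}$, the direction $\kappa'^\alpha$ is also not tangent to $\scri$, it is distinct from $\kappa^\alpha$, and $\ct{n}{^\alpha}\in\Pi=\mathrm{span}(\kappa^\alpha,\kappa'^\alpha)$ --- which is precisely the asserted pairing. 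It is well defined and involutive because $\Pi$ depends only on $\kappa^\alpha$ and $\ct{n}{^\alpha}$, and $\mathrm{span}(\kappa'^\alpha,\ct{n}{^\alpha})=\Pi$ returns $\kappa^\alpha$ as the partner of $\kappa'^\alpha$.

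The delicate point, on which the whole argument rests, is the first step: that \eqref{linear1} is available in \emph{every} coplanar tetrad with one and the same $\alpha$, not just in the particular tetrad of \cref{rmk:weyl-scalars} --- which is exactly where the observer-independent, covariant nature of \eqref{eq:condition-prop-covariant} is indispensable. Consistency with the algebraically special cases should then be automatic: applied to the multiple PND in Petrov types $\mathrm{III}$ and $\mathrm{N}$, the construction would, via \eqref{linear}--\eqref{linear1}, force a PND of too high a multiplicity \emph{unless} that direction is tangent to $\scri$, so no ``unpaired'' non-tangent PND can occur. As an alternative I could instead work straight from \eqref{eq:formula270}: since every bracketed factor there is manifestly non-negative (and strictly positive for generic $\ct{u}{^\alpha}$), imposing $\ct{n}{_\mu}\ct{\P}{^\mu}\prn{\vec u}=0$ for all admissible $\ct{u}{^\alpha}$ would force the sign-indefinite projections $B_1,B_2,\tilde B_3,\breve B_4$ to cancel in pairs, and unwinding the geometry of those cancellations would give the same conclusion; but the coplanar-tetrad route above is much shorter.
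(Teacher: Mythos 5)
Your proposal is correct and follows essentially the same route as the paper's proof: invoke \cref{thm:linearly-dependent} to get the covariant condition \eqref{eq:condition-prop-covariant}, pass to a coplanar tetrad \eqref{eq:ellcoplanar}--\eqref{eq:kcoplanar} aligned with the non-tangent PND, and use \eqref{linear1} with $\alpha\neq0$ to force the opposite null leg to be a PND as well. The only difference is presentational --- you spell out explicitly (via the Lorentzian plane $\mathrm{span}(\kappa^\alpha,n^\alpha)$ and the observer-independence of $\alpha$) the tetrad-adaptation step that the paper compresses into ``align $\ell^\alpha$ with it.''
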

\begin{proof}
Recall that \eqref{eq:condition} is equivalent to \eqref{eq:condition-prop-covariant}, which in turn can be re-written in a null tetrad  coplanar with $\ct{n}{^{\alpha}}$ ---of type \cref{eq:ellcoplanar,eq:kcoplanar}--- as in \eqref{lin0}--\eqref{linear2}. Next, \emph{assume there is one PND, say $k_1^\alpha$, not tangent to $\scri$} --so  it has $B_1\neq 0$-- and align the tetrad vector $\ct{\ell}{^\alpha}$ with it,
		\begin{equation*}
			B_1\sqrt{2}\ct{\ell}{^{\alpha}}=\ct{k}{_{1}^{\alpha}}\ .
		\end{equation*} 
		As $\ell^\alpha$ is a PND, one has $\phi_0=0$. 
		Then,  \eqref{lin1} reads simply $\beta\bar{\phi}_4 =-i\gamma \bar{\phi}_4$ which, as $\beta$ and $\gamma$ are real, leads directly to $\phi_4 =0$. This implies that $\ct{k}{^{\alpha}}$ is a PND too.
\end{proof}

 	\Cref{eq:formula270} and Lemma \ref{lemma:pnd-pair-scri} are  used to prove another result on the absence of gravitational radiation in terms of PNDs and the Petrov classification of the rescaled Weyl tensor at $\scri$:
	 \begin{thm}[No gravitational radiation condition with $\Lambda<0$ and algebraic classification]\label{thm:nograv-algebraic}
		 Let $\prn{M,\ct{g}{_{\alpha\beta}}}$ be the conformal completion of a physical space-time with negative cosmological constant $\Lambda<0$, with conformal boundary $\scri$ whose unit normal is $\ct{n}{_{\alpha}}$. Then, there is no gravitational radiation at the conformal boundary $\scri$ if and only if one of the next cases hold:
		 	\begin{enumerate}
		 	\item \label{it:case1} All principal null directions $\ct{k}{_{i}^\alpha}$ of the rescaled Weyl tensor $\ct{d}{_{\alpha\beta\gamma}^{\delta}}$ are tangent to $\scri$.
		 	\item \label{it:case2} $\ct{n}{_{\alpha}}$ is coplanar with two principal null directions of the same multiplicity and the remaining PNDs (if any) are tangent to $\scri$.	
			\item \label{it:case3} At $\scri$, $\ct{d}{_{\alpha\beta\gamma}^{\delta}}$ is of Petrov type I and $\ct{n}{_{\alpha}}$ is coplanar with a pair of principal null directions, and simultaneously coplanar with the remaining pair of PNDs. In this case, $\ct{d}{_{\alpha\beta\gamma}^{\delta}}$ is necessarily purely electric or purely magnetic with respect to the unique principal timelike direction, which is tangent to $\scri$.		
		 	\item \label{it:case4}The rescaled Weyl tensor vanishes.
		 	\end{enumerate}
	 \end{thm}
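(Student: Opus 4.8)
The plan is to prove the two implications separately. Throughout I use the three facts already in the excerpt: the equivalence (\cref{thm:linearly-dependent}) between the no-radiation condition \eqref{eq:condition} and the pointwise proportionality $\beta\,D_{\alpha\beta}=\gamma\,C_{\alpha\beta}$ with $(\beta,\gamma)\neq(0,0)$; its reformulation as the linear scalar relations \eqref{linear1}, \eqref{linear}, \eqref{linear2} in a null tetrad coplanar with $n^\alpha$ of the type \eqref{eq:ellcoplanar}--\eqref{eq:kcoplanar}; and \cref{lemma:pnd-pair-scri}. The explicit expression \eqref{eq:formula270}, together with its algebraically special versions worked out in \cref{sec:appendix}, supplies the quantitative control.

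For the ``if'' direction I would go configuration by configuration and show that \eqref{eq:condition} holds by making the right-hand side of \eqref{eq:formula270} vanish identically in $u^\alpha$. Case \ref{it:case4} is immediate. In case \ref{it:case1} all of $B_1,B_2,\tilde B_3,\breve B_4$ vanish, since $B_i=k_i^\mu n_\mu=0$ when the PND $k_i^\alpha$ is orthogonal to $n^\alpha$, and these four coefficients are the only non-sign-definite factors in \eqref{eq:formula270}, so the whole expression collapses. In cases \ref{it:case2} and \ref{it:case3} I would instead align a coplanar tetrad with a pair of PNDs spanning a plane through $n^\alpha$: as PNDs of common multiplicity $m$ they kill $\phi_0,\dots,\phi_{m-1}$ and $\phi_4,\dots,\phi_{5-m}$, and, combining this with the vanishing of the $B$'s attached to the remaining tangent PNDs, one verifies that \eqref{linear1}--\eqref{linear2} admit a nonzero $(\beta,\gamma)$ ---for type D, for instance, by choosing $\alpha=\beta+i\gamma$ so that $\alpha\bar\phi_2$ is purely imaginary; for type I in case \ref{it:case3} by combining the two coplanar tetrads adapted to the two pairs, exactly as in the proof of \cref{lemma:pnd-pair-scri}.

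For the ``only if'' direction, assume \eqref{eq:condition} and $d_{\alpha\beta\gamma}{}^\delta\neq0$. By \cref{lemma:pnd-pair-scri} the PNDs not tangent to $\scri$ come in pairs, each spanning a timelike $2$-plane that contains $n^\alpha$. I would first sharpen that lemma: aligning $\ell^\alpha$ with one member of such a pair makes $\phi_0$ vanish ---and also $\phi_1,\phi_2$ if that PND is multiple--- whereupon \eqref{linear1}, \eqref{linear} and, when needed, \eqref{linear2} force $\phi_4$ ---and $\phi_3$--- to vanish, so the reflected null direction is a PND of at least the same multiplicity and, by symmetry, of exactly the same multiplicity. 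In particular a transverse PND cannot have multiplicity $\geq3$. Enumerating now the number of transverse PNDs yields: none $\Rightarrow$ case \ref{it:case1}; one pair, of common multiplicity $m$, with the remaining PNDs (of total multiplicity $4-2m$) tangent $\Rightarrow$ case \ref{it:case2} (for $m=2$ there are no remaining PNDs and the Petrov type is necessarily D); two disjoint pairs $\Rightarrow$ four distinct simple PNDs, hence Petrov type I with $n^\alpha$ lying simultaneously in two distinct timelike planes $\Rightarrow$ case \ref{it:case3}. For that last case one finally shows, by imposing the scalar constraints in the two adapted coplanar tetrads, that the magnetic (or electric) part of $d_{\alpha\beta\gamma}{}^\delta$ relative to the unique principal timelike direction of the type I tensor is forced to vanish, and that this direction lies in $T\scri$.

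The main obstacle, in both directions, is case \ref{it:case3}: proving that ``$n^\alpha$ coplanar with both pairs'' is equivalent to the type I tensor being purely electric or purely magnetic about a timelike direction tangent to $\scri$. The cleanest route I would actually follow here, sidestepping the delicate sign bookkeeping of \eqref{eq:formula270} after the null rotations of \cref{fig:set1,fig:set2}, is to argue directly on $C_{\alpha\beta}$ and $D_{\alpha\beta}$: by the decomposition of $d_{\alpha\beta\gamma}{}^\delta$ in terms of $C_{\alpha\beta}$ and $D_{\alpha\beta}$ displayed just before \cref{thm:linearly-dependent}, proportionality $\beta D_{\alpha\beta}=\gamma C_{\alpha\beta}$ means that the rescaled Weyl tensor is reconstructed from a single symmetric traceless tensor on the three-dimensional Lorentzian space $T\scri$; its Petrov type and PND arrangement are then read off from the finitely many Segre types of that operator ---three real eigenvalues with a timelike eigenvector giving case \ref{it:case3} (or the type D instance of case \ref{it:case2} when two of them coincide), a null eigenvector giving the remaining instances of cases \ref{it:case2} and \ref{it:case1}, a complex-conjugate eigenvalue pair giving a transverse pair together with a tangent pair--- and matching this finite list against the PND counts above closes the argument. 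Carrying out that match, and in particular handling the algebraically special Petrov types where \eqref{eq:formula270} degenerates, is the step demanding the most care.
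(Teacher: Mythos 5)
Your overall architecture is viable and, in the ``only if'' direction, close to the paper's: both use \cref{thm:linearly-dependent} plus the scalar relations \eqref{linear1}--\eqref{linear2} and \cref{lemma:pnd-pair-scri}, and your multiplicity-pairing enumeration is a workable substitute for the paper's discriminant analysis via ${\rm I}^3-27{\rm J}^2$. Your ``if'' direction is genuinely different (and potentially cleaner): instead of evaluating \eqref{eq:formula270} and its type II/D reductions \eqref{eq:typeII}, \eqref{eq:typeD} as the paper does, you propose to check that the configurations of cases \ref{it:case2}--\ref{it:case3} satisfy \eqref{linear1}--\eqref{linear2} in a tetrad adapted to a transverse pair. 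But that check is exactly where the content lies, and you only assert it. Concretely, with $\phi_0=\phi_4=0$ the remaining PNDs are the roots $z_\pm$ of $2\phi_3z^2+3\phi_2z+2\phi_1=0$; a nonzero $\alpha$ solving \eqref{linear} and \eqref{linear2} exists iff $|\phi_1|=|\phi_3|$ and $\phi_1\phi_3/\phi_2^2$ is real non-negative, and one must show that these follow from ``the remaining PNDs are tangent'' ($z\bar z=1$) or ``the remaining pair is coplanar with $n^\alpha$'' ($z_+-z_-=z_+z_-\,\overline{(z_+-z_-)}$, i.e.\ equal phases with $|z_+z_-|=1$), including the type II subcase where the double root and its tangency are what make \eqref{linear} and \eqref{linear2} compatible. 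These verifications do go through, but none of them is carried out in your proposal.

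The more serious gap is the purely electric/purely magnetic assertion of case \ref{it:case3}, which is part of the statement and must be proved in the ``only if'' direction. The paper obtains it almost immediately: coplanarity of $n^\alpha$ with both pairs yields the linear relation $B_1k_2^\mu+B_2k_1^\mu-B_3k_4^\mu-B_4k_3^\mu=0$ among the four PNDs, and then one invokes the known characterization \cite{McIntoshetal1994,Ferrando1997} that a Petrov type I Weyl candidate has linearly dependent PNDs if and only if it is purely electric or purely magnetic with respect to its unique principal timelike direction, which is a positive combination of the PNDs and hence tangent to $\scri$. Your proposal neither uses this fact nor proves a replacement: both of your suggested routes (``imposing the scalar constraints in the two adapted coplanar tetrads'' and the Segre-type classification of the single tensor appearing in \eqref{eq:condition-prop-covariant}) are left unexecuted, and you yourself flag them as the step demanding the most care. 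As it stands, the PE/PM property and the tangency of the principal timelike direction are not established, so the argument is a plan rather than a proof; supplying the adapted-tetrad verifications above and either the linear-dependence criterion or an explicit computation for case \ref{it:case3} would close it.
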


\begin{remark}
Observe that case \ref{it:case1} is feasible for all possible Petrov types of $\ct{d}{_{\alpha\beta\gamma}^{\delta}}$, and that this is the only possible case if the Petrov type is N or III. For type D and II there is another possibility, in the former case if $n^\mu$ is coplanar with the multiple PNDs, and in the latter if $n^\mu$ is coplanar with the two single PNDs and the multiple one is tangent to $\scri$ (case \ref{it:case2}). Finally, if $\ct{d}{_{\alpha\beta\gamma}^{\delta}}$ has Petrov type I, all cases \ref{it:case1}, \ref{it:case2} and \ref{it:case3} may arise.
\end{remark}
	 \begin{proof}
		 First, if case \ref{it:case1} holds, $B_1$, $B_2$, $\tilde{B}_{3}$ and $\breve{B}_{4}$ in \cref{eq:formula270} vanish, and therefore  $\ct{n}{_{\mu}}\ct{\P}{^{\mu}}\prn{\vec{u}}=0$ independently of the values of the positive $a'$s and $b'$s and of the $d'$s, that is to say, for $\ \forall\ \ct{u}{^{\alpha}}$ tangent to $\scri$, and thus there is no gravitational radiation. 
		To prove that case \ref{it:case2} also leads to absence of radiation, we assume coplanarity of $k^\alpha_{1}$ and $k^\alpha_{2}$ with $\ct{n}{_{\alpha}}$ then
		 $$
		 n^\alpha = -B_2 k_1^\alpha - B_1 k_2^\alpha, \hspace{4mm} 2B_1 B_2 =-1\ ,
		 $$
		 and given that, by definition (see \eqref{eq:u-components})
		 $$\ct{u}{^\alpha}=ak^\alpha_{1}+bk^\alpha_{2}+d\ct{m}{^{\alpha}}+\bar{d}\ct{\bar{m}}{^{\alpha}}$$
 one can immediately check that the condition of $u^\alpha$ being tangent to $\scri$ reads
 $$
 \ct{n}{_{\alpha}}\ct{u}{^\alpha}=a B_1 + b B_2 =0 .
 $$ 
If the rescaled Weyl tensor has Petrov type D one can use  \cref{eq:typeD} in \cref{sec:appendix}, which is a reduction of \cref{eq:formula270} to the case of a Petrov type D $\ct{d}{_{\alpha\beta\gamma}^{\delta}}$, namely
		 	\begin{equation} \label{typeD}
		 		\ct{n}{_{\mu}}\ct{\P}{^\mu}\prn{\vec{u}}=36\phi_{2}\bar{\phi}_{2}\prn{aB_1+bB_2}\prn{ab+d\bar{d}}\ .
		 	\end{equation} 
This together with the previous gives  $\ct{n}{_{\mu}}\ct{\P}{^{\mu}}\prn{\vec{u}}=0$ for all timelike $ \ct{u}{^{\alpha}}$ tangent to $\scri$.

If the rescaled Weyl tensor has Petrov type II one can instead use  \cref{eq:typeII} in \cref{sec:appendix}, which again is the appropriate reduction of \cref{eq:formula270} to type II. Furthermore, we must assume that the double PND is tangent to $\scri$, so that $B_1=0$ and \eqref{eq:typeII} reduces to
\begin{equation}\label{paso0}
\ct{n}{_{\mu}}\ct{\P}{^\mu}\prn{\vec{u}}=18{\phi}_2\bar{{\phi}_2}\Bigg\lbrace 
			 		B_2b\prn{\hat{a}\hat{b}+\hat{d}\bar{\hat{d}}}+\breve{B}_4\hat{c}\bar{\hat{c}}{b\prn{ab+d\bar{d}}}\Bigg\rbrace\ .
\end{equation}
Further, we need to assume that $n^\alpha$ lies in the plane of the remaining two PNDs, so that
$$
n^\alpha = -B_2 \breve{k}_4^\alpha -\breve{B}_4 k_2^\alpha , \hspace{4mm} 2 B_2 \breve{B}_4 =-1.
$$
This implies in particular that $n_\alpha \breve{m}^\alpha =0 = D+\bar{\breve{c}} B_2$ which together with \eqref{eq:B4-B1B2} and with the checked version of \eqref{eq:utilde}, gives respectively
$$
\breve{B}_4 =\breve{c} D = -B_2 \breve{c} \bar{\breve{c}}, \hspace{1cm} \breve{a} \breve{B}_4 =- \breve{b} B_2 \ ,
$$
implying in particular that
$$
0 = \breve{b} -\breve{a}\breve{c} \bar{\breve{c}}= b-\breve{c} \bar{d} -\bar{\breve{c}}d =\frac{1}{\hat{c}\bar{\hat{c}}} \left(b \hat{c}\bar{\hat{c}} -\hat{c} \bar{d} -\bar{\hat{c}} d\right) .
$$
Equation \eqref{paso0} becomes then (using \eqref{eq:c-inverse} and the hatted version of \eqref{eq:utilde})
$$
\ct{n}{_{\mu}}\ct{\P}{^\mu}\prn{\vec{u}}=18{\phi}_2\bar{{\phi}_2}B_2 b \left(\hat{a}\hat{b}+\hat{d}\bar{\hat{d}}-ab-d\bar{d}\right) =
36{\phi}_2\bar{{\phi}_2}B_2 b^2 \left(b \hat{c}\bar{\hat{c}} -\hat{c} \bar{d} -\bar{\hat{c}} d\right) = 0 
$$
as required. 

Consider then the case \ref{it:case2} for Petrov type I, setting $B_1 = B_2 =0$ in the general formula \eqref{eq:formula270} and 
\begin{equation}\label{n=k3k4}
n^\alpha = -B_3 k_4^\alpha - B_4 k_3^\alpha = -\alpha_3 \alpha_4 \left(\tilde{B}_3 \breve{k}_4^\alpha +\breve{B}_4 \tilde{k}_3^\alpha\right)
\end{equation}
where we have used \eqref{eq:k3} and \eqref{eq:k4}. This immediately provides $\underline{D}=0$ which, together with $B_1=B_2=0$, in particular entails 
\begin{eqnarray}
B_4 +\hat{f}\bar{\hat{f}} B_3 &=& \breve{B}_4 \alpha_4 +\alpha_3 \hat{f}\bar{\hat{f}} \tilde{B}_3 =0,\label{B4ffB3}\\
B_4 +f'\bar{f}' B_3 &=& 0,\\
B_3 + \breve{f} \bar{\breve{f}} B_4 &=& 0,\\
B_3 + \tilde{f} \bar{\tilde{f}} B_4 &=& 0
\end{eqnarray}
providing
$$
\hat{f}\bar{\hat{f}}=f'\bar{f}' =\frac{1}{\breve{f} \bar{\breve{f}}} =\frac{1}{\tilde{f} \bar{\tilde{f}}}.
$$
Combining this with \eqref{eq:alphas} we derive
\begin{equation}\label{cs}
\tilde{c}\bar{\tilde{c}}=\breve{c}\bar{\breve{c}}
\end{equation}
and also using \eqref{eq:alphas} in the second of \eqref{B4ffB3} one obtains
\begin{equation}\label{B4B3}
\breve{B}_4 +\tilde{B}_3 =0
\end{equation}
so that \eqref{eq:rearraengement} becomes
$$
-4 \ct{n}{_{\mu}} \cts{\P}{^\mu}\prn{\vec{u}}=\phi_{3}\bar{\phi}_{3}\tilde{B}_3 \prn{J_3-J_4}
$$
hence our task is to prove that $J_3 -J_4$ vanishes. From \eqref{eq:J3} and \eqref{eq:J4} and using \eqref{cs} 
$$
J_3-J_4 = (3ab+d\bar{d})  \left[ (\tilde{c} -\breve{c})\bar{d}+(\bar{\tilde{c}} -\bar{\breve{c}}) d\right] .
$$
From \eqref{eq:tildeb} and its checked version we immediately have on using \eqref{cs}
$$
(\tilde{c} -\breve{c})\bar{d}+(\bar{\tilde{c}} -\bar{\breve{c}})= \breve{b} -\tilde{b} =-u_\alpha (\breve{k}_4^\alpha -\tilde{k}_3^\alpha) =-\frac{1}{\alpha_3 \alpha_4 \tilde{B}_3} u_\alpha n^\alpha
$$
where in the last equality use of \eqref{B4B3} and \eqref{n=k3k4} has been made. In summary, we have got
$$
\ct{n}{_{\mu}} \cts{\P}{^\mu}\prn{\vec{u}}=\frac{1}{4\alpha_3 \alpha_4}\phi_{3}\bar{\phi}_{3}\, \,  u_\alpha n^\alpha =0
$$
for all $u^\alpha$ tangent to $\scri$.

\vspace{1cm}
It remains to prove case \ref{it:case3}. By using similar arguments, one immediately gets $\ubar{D}=D=0$. This, together with \cref{eq:alphas} and \cref{eq:B3-B1B2,eq:B4-B1B2} ---and the corresponding `underbar' versions--- gives
	\begin{align}
	   \alpha_{2}B_{1}+\breve{c}\bar{\breve{c}}\alpha_{2}B_{2}-B_{4} &= 0 \ ,\\
	    \alpha_{2}\breve{f}\bar{\breve{f}}B_{1}+\alpha_{2}\breve{f}\bar{\breve{f}}\tilde{c}\bar{\tilde{c}}B_{2}-B_{3} & = 0\ ,\\
	    -B_{1}+\breve{c}\bar{\breve{c}}B_{3}\alpha_{2}+\tilde{f}\bar{\tilde{f}}\breve{c}\bar{\breve{c}}B_{4}&=0\ ,\\
	    -B_{2}+\alpha_{2}B_{3}+\breve{f}\bar{\breve{f}}\alpha_{2}B_{4}&=0\ .
	\end{align}
Then, the values of $\tilde{c}\bar{\tilde{c}}$, $\breve{c}\bar{\breve{c}}$, $\tilde{f}\bar{\tilde{f}}$ and $\breve{f}\bar{\breve{f}}$ in terms of $B_{i}$ read
	\begin{equation}\label{eq:general-sol-cs-typeI-case3}
	\tilde{c}\bar{\tilde{c}}=\frac{\alpha_{2}B_{1}B_{3}}{B_{2}^2-\alpha_{2}B_{2}B_{3}}\ ,\quad\breve{c}\bar{\breve{c}}=\frac{B_{4}-\alpha_{2}B_{1}}{\alpha_{2}B_{2}}\ ,\quad \tilde{f}\bar{\tilde{f}}=\frac{\alpha_{2}B_{1}B_{3}}{B_{4}^2-\alpha_{2}B_{1}B_{4}}\ ,\quad \breve{f}\bar{\breve{f}}=\frac{B_{2}-\alpha_{2}B_{3}}{\alpha_{2}B_{4}}\ ,	
	\end{equation}
where, in addition, one has taken into account that $-2B_{1}B_{2}=1=-2B_{3}B_{4}$. Also, orthogonality of $\ct{u}{^\alpha}$ with $\ct{n}{_{\alpha}}$ implies
	\begin{equation}\label{eq:conditions-Bs}
		B_{3}=-\frac{\ubar{b}}{\ubar{a}}B_{4}\ ,\quad B_{1}=-\frac{b}{a}B_{2}\ .
	\end{equation}
For the next steps, one takes advantage of the boost invariance of \cref{eq:general-formula} to choose\footnote{It is not necessary to do this choice, but it simplifies considerably the calculations. In previous cases it has not been used because the equations were tractable without any boost-gauge fixing.} 
	\begin{equation}
		B_1=B_3=-B_{2}=-B_{4}=\frac{1}{\sqrt{2}}\ .
	\end{equation}
Setting these values into \cref{eq:general-sol-cs-typeI-case3,eq:conditions-Bs}, respectively, shows that 
\begin{equation}\label{eq:simplifications-1}
\tilde{c}\bar{\tilde{c}}=\tilde{f}\bar{\tilde{f}}\ ,\quad\breve{c}\bar{\breve{c}}=\breve{f}\bar{\breve{f}}\ ,\quad a=b\ ,\quad \ubar{a}=\ubar{b}\ ,
\end{equation}
and this gives the relations
	\begin{equation}\label{eq:simplifications-2}
		\hat{a}=\tilde{b}\ ,\quad a^\prime=\breve{b}\ ,
	\end{equation}
which also imply (see \cref{eq:ab-dd})
	\begin{equation}\label{eq:simplifications-3}
		d^\prime \bar{d}^\prime=\breve{d}\bar{\breve{d}}\ ,\quad \hat{d}\bar{\hat{d}}=\tilde{d}\bar{\tilde{d}}\ .
	\end{equation}
With \cref{eq:simplifications-1}, it is possible to express \cref{eq:rearraengement} as ---see \cref{eq:alphas,eq:k1,eq:k2,eq:k3,eq:k4}---
	\begin{equation}
		-4\ct{n}{_{\mu}}\cts{\P}{^{\mu}}\prn{\vec{u}}=\frac{\phi_{3}\bar{\phi_{3}}}{\sqrt{2}}\brkt{J_{1}-J_{2}+\frac{1}{\alpha_{3}}\prn{J_{3}-\breve{c}\bar{\breve{c}}J_{4}}}\ ,
	\end{equation}
but if one uses now \cref{eq:simplifications-1,eq:simplifications-2,eq:simplifications-3} in \cref{eq:J3,eq:J4} and \cref{eq:J1-step,eq:J2-step}, one finds
	\begin{equation}
	J_{1}-J_{2}=0\ ,\quad J_{3}-\breve{c}\bar{\breve{c}}J_{4}=0\ .
	\end{equation}
Thus,
	\begin{equation}
	\ct{n}{_{\mu}}\cts{\P}{^{\mu}}\prn{\vec{u}}=0.
	\end{equation}
Hence, it has been proved that cases \ref{it:case1}, \ref{it:case2} and \ref{it:case3} imply absence of gravitational radiation.

\vspace{1cm}
Next, we must prove the converse. For that, recall that from \cref{thm:linearly-dependent}  $\ct{n}{_{\mu}}\ct{\P}{^{\mu}}\prn{\vec{u}}=0\ \forall\ \ct{u}{^{\alpha}}$ tangent to $\scri$ is equivalent to the covariant condition \eqref{eq:condition-prop-covariant}, and that such a condition translates, in a null tetrad with \cref{eq:ellcoplanar,eq:kcoplanar}, into the formulas derived in subsection \ref{rmk:weyl-scalars}. Notice that $n^\alpha$ is coplanar with $k^\alpha$ and $\ell^\alpha$. If there exists one PND not tangent to $\scri$, then from \cref{lemma:pnd-pair-scri} we know that we can choose the null tetrad such that both $\ell^\alpha$ and $k^\alpha$ are PNDs (as they come in pairs if $\ct{n}{_{\mu}}\ct{\P}{^{\mu}}\prn{\vec{u}}=0\ \forall\ \ct{u}{^{\alpha}}$ tangent to $\scri$).
%
%
Hence, $\phi_0=\phi_4=0$. This implies that the invariants of the rescaled Weyl tensor (see \cite{McIntoshetal1994,Stephani2003} for definitions) are 
$$
{\rm I} =3\phi_2^2 -4\phi_1 \phi_3 , \hspace{5mm} {\rm J}= \phi_2 (2\phi_1 \phi_3 -\phi_2^2)
$$	
so that the vanishing of the following invariant 
\begin{equation}\label{inv}
{\rm I}^3 -27 {\rm J}^2 = 4\phi_1^2 \phi_3^2 (9\phi_2^2 -16 \phi_1 \phi_3)
\end{equation}
determines when the rescaled Weyl tensor is algebraically special, that is to say, when there are multiple PNDs. This can happen if
\begin{itemize}
\item $\phi_1=0$, which necessarily requires $\phi_3=0$ due to  \eqref{linear}. In this case both $k^\alpha$ and $\ell^\alpha$ are multiple PNDs, leading to the case \ref{it:case2} for a rescaled Weyl tensor of Petrov type D at $\scri$  ---unless $\phi_2=0$ too, which leads to the trivial case \ref{it:case4}.
\item $\phi_1\neq 0\neq \phi_3$, but 
\begin{equation}\label{disc}
9\phi_2^2 -16 \phi_1 \phi_3=0.
\end{equation}
By using standard techniques, i.e. by solving the quadratic equation 
$$
2\phi_3 x^2 +3 \phi_2 x +2\phi_1=0 \hspace{2mm} \Longrightarrow \hspace{3mm} x=-\frac{3\phi_2}{4\phi_3}
$$
which has the {\em unique} double root shown, we know that the null vector
$$
k'^\mu:= k^\mu +x \bar m^\mu +\bar{x} m^\mu + x\bar{x} \ell^\mu 
$$
is a double PND. Therefore, this is a Petrov type II  rescaled Weyl tensor, and the normal is coplanar with the two single PNDs. It remains to show that the double PND $k'^\mu$  is tangent to $\scri$, but this follows by taking the scalar product with the normal
$$
\sqrt{2} n_\mu k'^\mu =1 - x\bar x =1 -\frac{9\phi_2 \bar\phi_2}{16\phi_3 \bar\phi_3} =0
$$
where in the last equality we have used \eqref{linear}, \eqref{linear2} and \eqref{disc}. Therefore, the double PND in this situation is tangent to $\scri$, and the case \ref{it:case2} for a rescaled Weyl tensor of Petrov type II at $\scri$ is proven.
\end{itemize}
	
There remains the case where $\phi_1\neq 0 \neq \phi_3$ and \eqref{disc} does not hold, so this is necessarily a Petrov type I situation. There are two possibilities again, because if there is a third PND not tangent to $\scri$, then as we proved before this comes in a pair with another PND such that $n^\mu$ is coplanar with these two new PNDs; otherwise the two extra PNDs must be tangent to $\scri$. The latter possibility leads to the case \ref{it:case2} for Petrov type I. In the former situation, in addition to \cref{eq:ellcoplanar,eq:kcoplanar}	 we also have the corresponding relation that places $n^\alpha$ in the plane generated by $k_3^\alpha$ and $k_4^\alpha$, that is
		$$
		n^\mu =(k^\mu-\ell^\mu)/\sqrt{2} =-B_1 k_2^\mu -B_2 k_1^\mu = -B_3 k_4^\mu -B_4 k_3^\mu, \hspace{3mm} 
		2B_1 B_2 = 2 B_3 B_4 =-1
		$$
implying the linear dependency of the four PNDs as
$$
B_1 k_2^\mu +B_2 k_1^\mu  -B_3 k_4^\mu -B_4 k_3^\mu =0.
$$
This is known to happen \cite{McIntoshetal1994} if, and only if, the rescaled Weyl tensor at $\scri$ is purely electric, or purely magnetic, with respect to a timelike vector $\vec v$ which is the unique principal timelike vector of $\ct{d}{_{\alpha\mu\beta\nu}}$, $\vec v$  is a linear combination of the  four PNDs with all coefficients strictly positive \cite{Ferrando1997}, and actually $\vec v$ is tangent to $\scri$ and has the entire asymptotic super-Poynting vanishing $ \cts{\P}{^{\mu}}\prn{\vec{v}}=0$.
		
	In summary, we have proven that the  no-gravitational radiation condition \eqref{eq:condition} implies that either case \ref{it:case2} or case \ref{it:case3} or case \ref{it:case4} of the theorem must hold if at least one of the PNDs is not tangent to $\scri$. The only remaining alternative is that all PNDs  are tangent to $\scri$, arriving at case \ref{it:case1} of the theorem.
	 \end{proof}
\begin{remark}
Notice that case \ref{it:case1} for Petrov type I also implies that the rescaled Weyl tensor is purely electric or purely magnetic, because the four PNDs are tangent to $\scri$ and this necessarily implies that they are linearly dependent.
\end{remark} 
\begin{remark}
The Petrov type of the rescaled Weyl tensor at $\scri$ should not be confused with the Petrov type of the Weyl tensor in the physical spacetime. As explained in \cite{Fernandez-Alvarez-Senovilla2022b,Senovilla2022}, the Petrov type of $\ct{d}{_{\alpha\mu\beta\nu}}$ must be equal or more specialized than that of the physical Weyl tensor and thus, for instance, if the latter is type I, all possible Petrov types are feasible for $\ct{d}{_{\alpha\mu\beta\nu}}$ at $\scri$. See \cite{Fernandez-Alvarez-Senovilla2022b,Senovilla2022} for further details.
\end{remark}

		Absence of outgoing (or incoming) gravitational radiation can be recognised geometrically too in some situations, now based on \cref{crit:no-outgoing-gw}:
	 		\begin{thm}[No outgoing/incoming gravitational radiation with $\Lambda<0$]\label{thm:outgoing-gw}
	 		 	There is no outgoing (incoming) gravitational radiation at $\scri$ if no principal null direction of the rescaled Weyl tensor $\ct{d}{_{\alpha\beta\gamma}^{\delta}}$ at $\scri$ point outwards (inwards), i.e.,
	 		 		\begin{equation}
	 		 			\ct{k}{_i^\alpha}\ct{n}{_{\alpha}}\geq 0\ \ \ \quad  (\ct{k}{_i^\alpha}\ct{n}{_{\alpha}}\leq 0)\hspace{3mm} \forall i=1,2,3,4.
	 		 		\end{equation}
	 		\end{thm}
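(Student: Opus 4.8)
The plan is to read the statement straight off the master formula \eqref{eq:formula270}. By \cref{crit:no-outgoing-gw}, the absence of outgoing (incoming) gravitational radiation at $\scri$ is equivalent to $\ct{n}{_{\mu}}\ct{\P}{^{\mu}}\prn{\vec{u}}\geq 0$ (respectively $\leq 0$) for every unit timelike $\ct{u}{^{\alpha}}$ tangent to $\scri$, so it suffices to show that the hypothesis $\ct{k}{_i^\alpha}\ct{n}{_{\alpha}}\geq 0$ for $i=1,2,3,4$ forces every term on the right-hand side of \eqref{eq:formula270} to be non-negative. Inspecting \eqref{eq:formula270}, the overall prefactors $4\bar{\ubar{\phi}_3}\ubar{\phi}_3$ and $4\phi_3\bar{\phi}_3$ are moduli squared, hence non-negative, and inside the braces every monomial is a product of the strictly positive real coefficients $a,b$ (and their decorated versions) with moduli squared of the complex coefficients $c,d,f$ (and their decorated versions); thus the \emph{only} sign-indefinite factors in \eqref{eq:formula270} are the four numbers $B_1,B_2,\tilde B_3,\breve B_4$.

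Next I would record that these four numbers carry precisely the sign information appearing in the hypothesis. By definition $B_1=k_1^\mu n_\mu$ and $B_2=k_2^\mu n_\mu$, while $\tilde B_3=\tilde k_3^\mu n_\mu$ and $\breve B_4=\breve k_4^\mu n_\mu$ with $\tilde k_3^\alpha$, $\breve k_4^\alpha$ aligned with and positively proportional to the PNDs $k_3^\alpha$, $k_4^\alpha$ through the strictly positive normalisation factors of \cref{sec:appendix} (cf. \eqref{typeD} and the relations \eqref{eq:alphas}, \eqref{eq:k3}, \eqref{eq:k4} used there); hence $\tilde B_3$ and $\breve B_4$ have the same signs as $k_3^\mu n_\mu$ and $k_4^\mu n_\mu$. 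Consequently, if all four principal null directions satisfy $\ct{k}{_i^\alpha}\ct{n}{_{\alpha}}\geq 0$, then $B_1,B_2,\tilde B_3,\breve B_4\geq 0$, every term on the right-hand side of \eqref{eq:formula270} is $\geq 0$, and therefore $\ct{n}{_{\mu}}\ct{\P}{^{\mu}}\prn{\vec{u}}\geq 0$ for every admissible $\ct{u}{^{\alpha}}$; by \cref{crit:no-outgoing-gw} there is no outgoing gravitational radiation at $\scri$. The incoming statement follows by reversing all inequalities verbatim.

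It then remains to dispose of the algebraically special configurations, where two or more PNDs coincide, or where some PNDs lie tangent to $\scri$, so that \eqref{eq:formula270} degenerates; there one uses instead its reductions collected in \cref{sec:appendix}, such as \eqref{typeD} for Petrov type D ($\ct{n}{_{\mu}}\ct{\P}{^\mu}\prn{\vec{u}}=36\phi_2\bar{\phi}_2(aB_1+bB_2)(ab+d\bar d)$, visibly non-negative once $B_1,B_2\geq 0$) and the analogous reduction for type II. In every case the structure is the same: the sign-indefinite factors are again (products of) the $B_i$, and the remaining factors are products of strictly positive reals and moduli squared. I expect the main — and essentially only — obstacle to be bookkeeping: confirming from \cref{sec:appendix} that the normalisation constants relating the decorated PND vectors to the bare ones are strictly positive, so that the sign of $\tilde B_3$ equals that of $k_3^\mu n_\mu$ and the sign of $\breve B_4$ equals that of $k_4^\mu n_\mu$, and checking that no degenerate arrangement of the PNDs (coincidences, or a PND tangent to $\scri$ making the corresponding $B_i$ vanish) secretly introduces a new sign-indefinite factor into the reduced formulas.
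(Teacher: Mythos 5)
Your proposal is correct and follows essentially the same route as the paper, whose proof is simply to read the sign off \eqref{eq:formula270}: since every factor other than $B_1$, $B_2$, $\tilde{B}_3$, $\breve{B}_4$ is manifestly non-negative, $\ct{k}{_i^\alpha}\ct{n}{_{\alpha}}\geq 0$ for all $i$ gives $\ct{n}{_{\mu}}\ct{\P}{^{\mu}}\prn{\vec{u}}\geq 0$ for all admissible $\ct{u}{^{\alpha}}$, and \cref{crit:no-outgoing-gw} concludes. Your extra bookkeeping (the positive proportionality $\alpha_3,\alpha_4>0$ identifying the signs of $\tilde{B}_3,\breve{B}_4$ with those of $k_3^\mu n_\mu,k_4^\mu n_\mu$, and the appeal to the algebraically special reductions of \cref{sec:appendix}) is sound and merely fills in details the paper leaves implicit.
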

			\begin{proof}
				This reads from \cref{eq:formula270}. If none of the $B_1$, $B_2$, $\tilde{B}_3$, $\breve{B}_4$ is negative, then $	\ct{n}{_{\mu}}\ct{\P}{^\mu}\prn{\vec{u}}\geq 0 $ for all $\ct{u}{^\alpha}$ tangent to $\scri$.
			\end{proof}

\section{Initial data and new physical boundary conditions}\label{sec:IBVP}
The initial value formulation for Einstein Field Equations (EFE) with negative $\Lambda$ takes the form of an \emph{initial boundary value problem} (IBVP). From the viewpoint of conformal spacetime, this question was tackled by Friedrich in \cite{Friedrich1995}, where he achieved a geometric and covariant formulation of the boundary conditions. There are also other more recent works in the literature, like approaches in physical space-time \cite{Friedrich-Nagy1999}, generalisations to higher dimensions \cite{Enciso-Kamran-2019}, uniqueness results using Fefferman-Graham expansions \cite{Holzegel-Shao2023,Shao2024}, and the inclusion of matter-fields content \cite{Carranza-Kroon2019}. It is \cite{Friedrich1995}, however, the work that better fits our present purposes. \\

Friedrich showed that the Cauchy data on a 3-dimensional spacelike hypersurface $\Sigma$, that is, its first $\ct{h}{_{\alpha\beta}}$ and second  $\ct{k}{_{\alpha\beta}}$ fundamental forms, together with the conformal class of metrics $\brkt{\cts{g}{_{ab}}}$ on $\scri$ determine a unique (up to diffeomorphisms), physical solution $\ct{\hat{g}}{_{\alpha\beta}}$ of the $\Lambda$-vacuum  EFE with negative cosmological constant. Thus, information about gravitational radiation must be encoded on these data, and it can be seen that \cref{crit:grav-rad} indeed puts restrictions on $\prn{\ct{h}{_{\alpha\beta}},\ct{k}{_{\alpha\beta}}}$ and on $\brkt{\cts{g}{_{ab}}}$. To understand this, first, let $\ct{u}{^\alpha}$ be the unit normal to $\Sigma$, and assume it is orthogonal to $\ct{n}{_{\alpha}}$  on $\scri$ \footnote{This construction, with the normal to the initial hypersurface tangent to $\scri$, is precisely the choice made by Friedrich in \cite{Friedrich1995}, see also \cite{Friedrich2014}.}. \Cref{crit:grav-rad} is a condition on the asymptotic super-Poyntings relative to timelike vectors $\vec u$ tangent to $\scri$ , but as is well known \cite{Bel1958,Bel1962,Senovilla2000,Alfonso2008,Fernandez-Alvarez-Senovilla2022b} any super-Poynting can be expressed as a matrix commutator of the electric and magnetic parts relative to $\vec u$. Therefore, our no-radiation criterion can be re-stated in terms of the matrix commutator of the electric $\ct{E}{_{\alpha\beta}}\prn{\vec{u}}$ and magnetic $\ct{B}{_{\alpha\beta}}\prn{\vec{u}}$ parts of the rescaled Weyl tensor ---computed with respect to $\ct{u}{^{\alpha}}$ at $\scri$--- as
	\begin{equation}\label{eq:commutator}
	u^\mu n^\nu \eta_{\mu\nu\rho\sigma} B(\vec u)^{\rho\tau} E(\vec u)^{\sigma}{}_\tau =0,
		\ \ \quad  \forall\ \ct{u}{^{\alpha}}\ /\ \ct{n}{_{\mu}}\ct{u}{^{\mu}}=0\ ,\ct{u}{^\mu}\ct{u}{_{\mu}}=-1\ .
	\end{equation}	
This clearly imposes restrictions on $\ct{E}{_{\alpha\beta}}\prn{\vec{u}}$ and $\ct{B}{_{\alpha\beta}}\prn{\vec{u}}$ at the intersection of the initial hypersurface $\Sigma$ with $\scri$, but the former tensor field is determined by the ``timelike'' derivative of the second fundamental form $\ct{k}{_{\alpha\beta}}$ and the intrinsic Levi-Civita connection of $\Sigma$ determined by $\ct{h}{_{\alpha\beta}}$. Similarly, $\ct{B}{_{\alpha\beta}}\prn{\vec{u}}$ is determined by covariant derivatives within $\Sigma$ of the second fundamental form $\ct{k}{_{\alpha\beta}}$. Hence the criterion constraints the Cauchy data on $\Sigma\cap \scri$. The other piece of data, i.e., the boundary data given by $[\cts{g}{_{\alpha\beta}}]$, is also constrained by the radiation condition. To see this, recall that, as follows from \cref{thm:linearly-dependent}, the criterion establishes a linear relationship between  $\ct{C}{_{\alpha\beta}}$ and $\ct{D}{_{\alpha\beta}}$. Now one has to notice that $\ct{C}{_{\alpha\beta}}$ coincides up to a constant factor with the Cotton-York tensor of $\prn{\scri,\cts{g}{_{ab}}}$, hence \cref{crit:grav-rad} restricts also the conformal class of metrics on $\scri$, and its interplay with $D_{\alpha\beta}$.\\

 A very important remark is that Friedrich's construction depends on the choice of the initial hypersurface, but the set-up always assumes that $u^\alpha$ is orthogonal to $n^\alpha$ at the intersection of the initial hypersurface $\Sigma$ with $\scri$, in other words, that $u^\alpha$ is tangent to $\scri$ at that intersection. However, there are plenty of initial hypersurfaces leading to the same physical spacetime, each of them will have a different timelike normal $u^\alpha$, {\em but all of them} will be tangent to $\scri$ by construction. Since the presence of gravitational radiation is independent of the choice of the initial hypersurface $\Sigma$ ---as long as they lead to the same physical spacetime---, and thus independent of the timelike $u^\alpha$ normal to $\Sigma$, it is a must  to ask that \cref{eq:commutator} be satisfied for all $\ct{u}{^{\alpha}}$ tangent to $\scri$. This further and convincingly supports our formulation of \cref{crit:grav-rad}. \\
 
Moreover, there is an alternative and independent way of reaching the same conclusions. Recently \cite{Holzegel-Shao2023} ---see also \cite{Shao2024}---, it has been shown that, provided a conformally invariant geometric condition is satisfied, there is a one-to-one correspondence between the zeroth $g^{(0)}$ and $n$-th $g^{(n)}$ coefficients (in our case, $n=3$) of the Fefferman–Graham expansion \cite{Fefferman-Graham-1985,Fefferman-Graham-2011} and the solution $\ct{g}{_{\alpha\beta}}$ in a nieghbourhood domain of the conformal boundary $\scri$. Remarkably, this result includes dynamical space-times, hence gravitational radiation should be determined by the two pieces of data $\prn{g^{(0)},g^{(3)}}$. In the conformal setup, $g^{(0)}$ coincides with the metric on $\scri$ ---that determines the Cotton-York tensor, or $\ct{C}{_{\alpha\beta}}$---,  whereas $g^{(3)}$ essentially coincides with the tensor $\ct{D}{_{\alpha\beta}}$ of the rescaled Weyl tensor built with respect to the normal $\ct{n}{^{\alpha}}$ to $\scri$, that is to say, \eqref{eq:n-electric}. Following \cref{thm:linearly-dependent}, then, the no-radiation condition of \cref{crit:grav-rad} sets a precise and definite constraint between the two pieces of `holographic data'\footnote{In the context of the AdS/CFT correspondence, $g^{(n)}$ is typically known as the holographic stress-energy tensor of the boundary \cite{Balasubramanian1999}.} $\prn{g^{(0)},g^{(3)}}$. This was also the idea behind \cite{Ciambelli2024} , but their condition is just necessary as discussed in the previous Remark \ref{sabrina}.
	\subsection{New boundary conditions}
				
		The question of what boundary conditions are to be posed on $\scri$ from the viewpoint of the IBVP has its own interest in studies of AdS instabilities and on the definition of physically reasonable gravitating systems ---see the discussion in \cite{Friedrich2014}. Friedrich found a \emph{family} of covariant and conformally invariant conditions in his work \cite{Friedrich1995}. According to \cite{Fournodavlos-Smulevici-2023} there are also other type of conditions, called dissipative, that can be proved to lead to a well-posed mathematical problem, but such that geometric uniqueness cannot be ensured due to the {\em gauge dependence} of the boundary conditions. In this section we are going to provide new boundary conditions that are fully gauge independent and, besides, that can be related to the existence/absence of graviational radiation, and also to its in- or out-going character. 
		
		To be more specific, the general boundary condition proposed by Friedrich reads, in our notation, 
		$$
		\phi_4 - f_1 \phi_0 - f_2 \bar\phi_0 = f_3,  \hspace{4mm} |f_1|+|f_2| \leq 1\ ,
		$$
		where $f_1, f_2, f_3 $ are complex functions on $\scri$.
Unfortunately, in its general form such a condition is not covariant in the sense that it depends on the choice of null tetrad. Thus, in order to get covariant boundary conditions, one must restrict them \cite{Friedrich2009,Friedrich2014} to a simpler version
$$
\phi_4 -\bar\phi_0 =f_3.
$$
Gauge independence can be accomplished by adding
\begin{equation}\label{FriBC}
\phi_3 -\bar \phi_1 =0, \hspace{3mm} \phi_2 -\bar \phi_2 =0 \hspace{5mm} \mbox{on} \scri\cap \Sigma. 
\end{equation}
Obviously, the case with $f_3=0$ is included in our no-radiation criterion \eqref{eq:rad-cond-tetrad-coplanar} as it is clear from its version \eqref{linear1}. Notice that there is a subtle point here: our condition \eqref{eq:rad-cond-tetrad-coplanar} can be required in an IBVP for a given initial $\Sigma$, but to ensure absence of gravitational radiation one needs that \eqref{eq:rad-cond-tetrad-coplanar} holds for all timelike $l^\alpha + k^\alpha$ tangent to $\scri$, leading to \eqref{linear1}, but also to \eqref{linear} and \eqref{linear2}. And this is the role played by the corner conditions in \eqref{FriBC}.

As shown in \cite{Friedrich2009,Friedrich2014}, \eqref{FriBC} entail $C_{\alpha\beta}|_{\scri\cap \Sigma} =0$ which, together with the divergence-free property of $C_{\alpha\beta}$ (a property of the given conformal structure on $\scri$) lead to
			\begin{equation}\label{eq:conformal-flatness}
				\ct{C}{_{\alpha\beta}}=0\ .
			\end{equation}
			That is, conformal flatness of $\scri$. Observe that this is clearly gauge and tetrad independent, and indeed compatible with \eqref{linear}--\eqref{linear2}:  it is just the case with $\beta =0 $ \footnote{Thus, conformal flatness implies absence of gravitational radiation at $\scri$. Also, for any spherically symmetric metric the Petrov type is D, leading to a rescaled Weyl tensor of Petrov type D or zero at $\scri$. In the former case, the two PNDs are orthogonal to the orbits of the rotation group (the round spheres) and thus the normal to scri will be coplanar with them, leading to absence of radiation as it must be.}. This condition on the geometry of the conformal boundary was termed as ``reflective'' many years before by Hawking \cite{Hawking1983}. Since then, it has been standardised in the literature and argued to be a natural requirement for the Anti-de Sitter scenario \cite{Ashtekar-Magnon1984,Ashtekar-Das2000}. However, the question of finding other sort of \emph{reasonable physical} boundary conditions in a gauge-invariant way (that is, not depending on the choice of coordinates nor on the conformal gauge) is an open question \cite{Fournodavlos-Smulevici-2023}. 
			
			 In that sense, one can use criterion \ref{crit:grav-rad}, as the conformal flatness condition \eqref{eq:conformal-flatness} is just a particular class within the \emph{broader family of asymptotically non-radiating space-times} given by \cref{eq:condition-prop-covariant}. Thus, we have found the most general form of boundary conditions leading to absence of radiation traversing $\scri$, as follows from Theorem \ref{thm:linearly-dependent}. They are given in intrinsic form by \eqref{eq:condition-prop-covariant}, and they contain two arbitrary functions on $\scri$, $\beta$ and $\gamma$, that can be given arbitrarily. Their version in terms of standard asymptotic Weyl rescaled scalars were given in \eqref{linear}--\eqref{linear2} (recall $\alpha: \beta +i\gamma$):
			 \begin{eqnarray*}
			 \alpha\bar\phi_4 =-\bar\alpha \phi_0,\\
			 \alpha \bar\phi_1 =-\bar\alpha \phi_3,\\
			 \alpha\phi_2 =- \bar\alpha\bar\phi_2 .
			 \end{eqnarray*}
One can take them as conditions on $\scri$, or some of them as corner condition on $\scri\cap \Sigma$ that will then be propagated along $\scri$. Notice. however, that the specific form of $\alpha$ might be restricted by the particular energy-momentum content of the physical spacetime that one wishes to construct from the IBVP, cf. remark \ref{remarkgradient}.

			 Similarly, other kind of covariant boundary conditions can be explored by allowing for \emph{the presence of gravitational radiation} by using criterion \ref{crit:no-outgoing-gw}, distinguishing between incoming and outgoing gravitational radiation using \cref{thm:outgoing-gw}. The basic idea now is to keep property \eqref{property} for all timelike $u^\alpha$ tangent to $\scri$. There may be many choices here, but we have identified two very simple examples, given by \eqref{example} and \eqref{example1}.  We can rewrite these conditions in terms of rescaled Weyl scalars by using an appropriate null tetrad. The best choice is to select a unit timelike $u^\alpha$ tangent to $\scri$ but orthogonal to the vector field $p^\alpha$ appearing in \eqref{example} and \eqref{example1}. Then we define the real null vectors in the tetrad in the usual manner, i.e. \cref{eq:ellcoplanar,eq:kcoplanar}. By construction then $p^\alpha$ is orthogonal to both $\ell^\alpha$ and $k^\alpha$, and can be used as the real part of the complex null $m^\alpha=(p^\alpha -i q^\alpha)/\sqrt{2}$. Here $q^\alpha$ is another unit spacelike vector, tangent to $\scri$ but orthogonal to $p^\alpha$ and $u^\alpha$. Notice that
			 \begin{eqnarray*}
			 n_\alpha L^\alpha =0 \Longrightarrow \hspace{5mm} \ell_\alpha L^\alpha = k_\alpha L^\alpha ,\\
			 L_\alpha L^\alpha =0 \Longrightarrow \hspace{5mm} L_\alpha q^\alpha = \sqrt{2} L_\alpha k^\alpha\ ,
			 \end{eqnarray*}
where in the last one a sign has been fixed by choice of orientation of $q^\alpha$. So, only one of these scalar products is independent, and thus it is convenient to give it a short name:
$$
Y:= k_\alpha L^\alpha <0.
$$
Now, by following the same calculations used in \eqref{eq:condition-propto}, \eqref{lin0}, \eqref{lin2} and \eqref{lin1}, conditions \eqref{example} lead to
\begin{eqnarray*}
\phi_2 &=&X  Y^2 ,\\
\phi_3 &=& i\frac{Y}{\sqrt{2}} \left( 2YX +Z\right),\\
\phi_1 &=& - i\frac{Y}{\sqrt{2}} \left( 2YX -Z\right),\\
\phi_0 &=& -Y(YX+\sqrt{2} Z),\\
\phi_4 &=& Y( \sqrt{2} Z-YX) .
\end{eqnarray*}
These can be used as boundary conditions for the case of no-incoming radiation through $\scri$ ($XZ\geq 0$) or no-outgoing radiation through $\scri$ ($XZ\leq0$). Observe that $\phi_2$ is real and positive, $\phi_1$ and $\phi_3$ are purely imaginary, and $\phi_0$ and $\phi_4$ are real. Note also that these conditions can be rewritten avoiding $Z$, $X$ and $Y$ as follows
\begin{eqnarray}
\phi_0+\phi_4 = -2\phi_2 =-2\bar\phi_2 =\frac{i}{\sqrt{2}}(\phi_3 -\phi_1) , \label{phis1}\\
\phi_4-\phi_0 = -2 i (\phi_1 +\phi_3) \label{phis2}.
\end{eqnarray}
All the elements of the first line \eqref{phis1} have the sign of $-X$, while those on the second line \eqref{phis2} have the sign of $-Z$. And this is how one controls the sign of $XZ$, that defines the absence of incoming or of outgoing radiation, for instance
\begin{equation}\label{sign}
\phi_4^2 -\phi_0^2 \geq 0 \hspace{5mm} (\mbox{or }  \leq 0).
\end{equation}

Similarly, conditions \eqref{example1} lead to
\begin{eqnarray*}
\phi_2 &=& i X Y^2 ,\\
\phi_3 &=& \frac{Y}{\sqrt{2}} \left( 2YX +Z\right),\\
\phi_1 &=& \frac{Y}{\sqrt{2}} \left( Z- 2YX \right),\\
\phi_0 &=& -i Y(\sqrt{2} Z-YX),\\
\phi_4 &=& i Y( \sqrt{2} Z+YX) .
\end{eqnarray*}
Now, $\phi_2$, $\phi_0$ and $\phi_4$ are purely imaginary, while $\phi_1$ and $\phi_3$ are real. Eliminating again $X$, $Y$ and $Z$ the conditions read
\begin{eqnarray}
i(\phi_0+\phi_4) = 2i\phi_2 =-2i\bar\phi_2 =\frac{1}{\sqrt{2}}(\phi_1 -\phi_3)   \label{phis3}\\
i(\phi_0-\phi_4) = 2  (\phi_1 +\phi_3) \label{phis4} .
\end{eqnarray}
Now,  the elements of the first line \eqref{phis3} have the sign of $-X$, while those on the second line \eqref{phis4} have the sign of $Z$. Thus we must add
\begin{equation}\label{sign1}
\phi_1^2 -\phi_3^2 \leq 0 \hspace{5mm} (\mbox{or }  \geq 0).
\end{equation}
As far as we know, these are new boundary conditions and they come with a clear interpretation: gravitational radiation only arriving at $\scri$, or gravitational radiation only entering the physical spacetime from $\scri$.
				
\section{Discussion}
This paper completes our program of characterizing the existence of gravitational radiation for arbitrary values of the cosmological constant $\Lambda$, based on the tidal properties of the gravitational field. Our driving force has been to consider the {\em flux of tidal energy} ---measured by {\em observers geometrically selected} by the structure of infinity in the conformal completion--- as the fundamental physical phenomenon signaling the existence of gravitational radiation traversing $\scri$. Thus, the underlying idea is simple, but has proven to be very powerful. Our characterization is equivalent to the news tensor characterization when $\Lambda =0$ \cite{Fernandez-Alvarez-Senovilla2020a,Fernandez-Alvarez-Senovilla2022a}, and it has given the correct results when $\Lambda >0$ \cite{Fernandez-Alvarez-Senovilla2020b,Fernandez-Alvarez-Senovilla2022b}, in particular showing  that ---within the general family of type D exact solutions representing (pairs of) black holes--- only {\em accelerated} ones radiate \cite{Fernandez-Alvarez-Podolsky-Senovilla2024}. In all cases, the basic physical quantity we have used is the asymptotic super-Poynting vector, a vector field that depends on the observer and is built with the Bel-Robinson tensor (i.e. the tidal energy-momentum tensor) of the rescaled Weyl tensor in analogy to the Poynting vector of electromagnetism.\\

In this paper, the case of $\Lambda <0$ has been settled. The main problem in comparison with the $\Lambda \geq 0$ cases is the absence of a privileged causal observer at $\scri$, given that now $\scri$ is itself a Lorentzian 3-dimensional geometry. This obstacle is resolved by simply asking that {\em all} possible observers within $\scri$ do not see flux of tidal energy. This was substantiated in criterion \ref{crit:grav-rad}, and later proven to be equivalent to the covariant condition \eqref{eq:condition-prop-covariant} requiring linear dependency of $\scri$'s Cotton-York tensor and the so-called holographic stress tensor, both built using only the rescaled Weyl tensor at $\scri$ and the unit normal there. Condition \eqref{eq:condition-prop-covariant} is the most general prescription compatible with absence of gravitational radiation at $\scri$, it can be applied even in the presence of matter subject to \eqref{T->0}, and it generalizes the standard ``reflective'' cases studied in the literature, as has been largely explained in section \ref{sec:IBVP}. In particular, our results fit perfectly well with the Friedrich analysis of the initial-boundary value problem for vacuum spacetimes with $\Lambda$ as well as with the mathematical results of the Fefferman-Graham expansions for timelike conformal boundaries. They may have some interesting consequences in the program of ``celestial holography'', trying to apply the holographic principle to the boundary of conformal completions of the physical spacetime, as has been discussed in \cite{Ciambelli2024} ---but notice remark \ref{sabrina}.\\

When radiation is absent at infinity, the structure of the rescaled Weyl tensor geometry at $\scri$  has been fully described in theorem \ref{thm:nograv-algebraic}. Essentially, the spacelike normal to $\scri$ must either be orthogonal to the principal null directions, or must be coplanar with a pair of them (or, in some situations, both of these). Such kind of alignments occur for all signs of $\Lambda$, in each case with its own peculiarities. For $\Lambda =0$ the normal is itself null, and absence of radiation requires it to be a multiple PND (and thus orthogonal to itself) \cite{Fernandez-Alvarez-Senovilla2020a,Fernandez-Alvarez-Senovilla2022a}, so that $d_{\alpha\beta\mu}{}^\nu$ is algebraically special at $\scri$. When $\Lambda >0$ , the timelike normal is a principal (timelike!) direction of the rescaled Weyl tensor \cite{Fernandez-Alvarez-Senovilla2020b,Fernandez-Alvarez-Senovilla2022b,Senovilla2022}, and this can only happen if $d_{\alpha\beta\mu}{}^\nu$ is type D or I at $\scri$. As we have shown herein, the richer case arises when $\Lambda<0$, as all possible Petrov types are allowed for $d_{\alpha\beta\mu}{}^\nu$ at $\scri$, as long as the PNDs organize themselves in an appropriate way with respect to the spacelike normal $n^\alpha$.\\

The cases with only outgoing radiation escaping from the physical spacetime (arriving at $\scri$) or only with ingoing radiation entering the physical spacetime (from $\scri$) have also been fully characterized in criterion \ref{crit:no-outgoing-gw}, and also in a covariant manner \eqref{property}, and some explicit examples \eqref{example} and \eqref{example1} have been provided. For the benefit of some readers we have written all proposed boundary data at $\scri$ explicitly in terms of the rescaled Weyl scalars, both for absence of radiation and cases with outgoing or incoming radiation. 
	\subsection*{Acknowledgments}
				 FFA was partly supported by the Grant Margarita Salas MARSA22/20 (Spanish Ministry of Universities and European Union), financed by European Union -- Next Generation EU.
				 JMMS is supported by the Basque Government grant number IT1628-22, and by Grant PID2021-123226NB-I00 funded by the Spanish MCIN/AEI/10.13039/501100011033 together with ``ERDF A way of making Europe''.

\appendix
\section{Orthogonal splitting of a Bel-Robinson tensor with respect to a unit spacelike vector}\label{App:decom}
The following formulas are needed in section \ref{sec:implications} but, given that they are general and may have an independent interest, we present them here for a general Weyl tensor candidate, i.e., any tensor with the symmetry and trace properties of the Weyl tensor.

Thus, let $W_{\alpha\beta\lambda}{}^\mu $ be any Weyl tensor candidate, so that
$$
W_{\alpha\beta\lambda\mu} =W_{[\alpha\beta][\lambda\mu]} = W_{\lambda\mu\alpha\beta} , \hspace{3mm} 
W_{\alpha[\beta\lambda\mu]}=0, \hspace{3mm} W_{\alpha\mu\lambda}{}^\mu =0.
$$
Given any such $W_{\alpha\beta\lambda}{}^\mu $ and any unit spacelike vector $n^\mu$ ($n^\mu n_\mu =1$), one can define two tensors orthogonal to $n^\mu$ that fully determine $W_{\alpha\beta\lambda}{}^\mu $, given by
\begin{align}
	 	    \ct{\C}{_{\alpha\beta}} &\defeqs \ct{n}{^{\mu}}\ct{n}{^{\nu}}\ctr{^*}{W}{_{\alpha\mu\beta\nu}} \ ,
		    \\
	 	    \ct{\D}{_{\alpha\beta}} &\defeqs \ct{n}{^{\mu}}\ct{n}{^{\nu}}\ct{W}{_{\alpha\mu\beta\nu}}\ .
	 	\end{align}
		Observe the analogy of these tensors with the typical electric and magnetic parts of a Weyl tensor, which are built with a unit timelike $u^\mu$ instead of $n^\mu$. Still, the interpretation of these two tensors is quite different, for instance notice that they are not spacelike tensors, and thus their squares $\ct{\C}{_{\alpha\beta}}\ct{\C}{^{\alpha\beta}}$ and $\ct{\D}{_{\alpha\beta}} \ct{\D}{^{\alpha\beta}} $ are not non-negative definite, nor they can be generically diagonalized as they live in a Lorentzian space. The properties of $\ct{\C}{_{\alpha\beta}} $ and $\ct{\D}{_{\alpha\beta}} $ are
		$$\D_{\alpha\beta} = \D_{(\alpha\beta)}, \hspace{3mm} \D_{\alpha\beta} n^\beta =0, \hspace{3mm} \D^\mu{}_\mu =0, \hspace{3mm} \C_{\alpha\beta} = \C_{(\alpha\beta)}, \hspace{3mm} \C_{\alpha\beta} n^\beta =0, \hspace{3mm} \C^\mu{}_\mu =0.$$
These two tensors fully determine $W_{\alpha\beta\lambda}{}^\mu $ (together with $n^\mu$) by means of the following explicit expression
$$
W_{\alpha\beta\lambda\mu} =(g_{\alpha\beta\rho\sigma} g_{\lambda\mu\tau\nu} -\eta_{\alpha\beta\rho\sigma} \eta_{\lambda\mu\tau\nu})n^\rho n^\tau \D^{\sigma\nu} - (g_{\alpha\beta\rho\sigma} \eta_{\lambda\mu\tau\nu} -\eta_{\alpha\beta\rho\sigma} g_{\lambda\mu\tau\nu})n^\rho n^\tau \C^{\sigma\nu} 
$$
where 
$$g_{\alpha\beta\lambda\mu}= g_{\alpha\lambda} g_{\beta\mu} -g_{\alpha\mu} g_{\beta\lambda} .
$$
It follows that
\begin{equation}\label{nW}
n^\alpha W_{\alpha\beta\lambda\mu} = 2\D_{\beta[\mu} n_{\lambda]} -\C_{\beta}{}^\rho \epsilon_{\rho\lambda\mu} 
\end{equation} 
where 
\begin{equation}\label{neta}
\epsilon_{\beta\mu\nu} := n^\alpha \eta_{\alpha\beta\mu\nu} , \hspace{3mm} n^\beta \epsilon_{\beta\mu\nu} =0.
\end{equation}
Define the corresponding Bel-Robinson tensor \cite{Bel1958, Senovilla2000}
$$
 			  \ct{\B}{_{\alpha\beta\gamma\delta}}\defeq \ct{W}{_{\alpha\mu\gamma}^\nu}\,\ct{W}{_{\delta\nu\beta}^\mu}
 	     +  \ctr{^*}{W}{_{\alpha\mu\gamma}^\nu}\,\ctr{^*}{W}{_{\delta\nu\beta}^\mu} =
	     \ct{W}{_{\alpha\mu\gamma}^\nu}\,\ct{W}{_{\delta\nu\beta}^\mu}+\ct{W}{_{\alpha\mu\delta}^\nu}\,\ct{W}{_{\gamma\nu\beta}^\mu}-\frac{1}{8}\ct{g}{_{\alpha\beta}}\ct{g}{_{\gamma\delta}} \ct{W}{_{\rho\mu\sigma}^\nu}\,\ct{W}{^{\rho\mu\sigma}_\nu} \, .
$$
This is a fully symmetric and traceless tensor. Its orthogonal splitting with respect to $n^\alpha$ reads (see  \cite{Alfonso2008} for the corresponding splitting with respect to a timelike vector)
\begin{equation}\label{split}
\ct{\B}{_{\alpha\beta\gamma\delta}}= w\,  n_\alpha n_\beta n_\gamma n_\delta +4 q_{(\alpha} n_\beta n_\gamma n_{\delta)}+6\tau_{(\alpha\beta}n_\gamma n_{\delta)}+ 4 q_{(\alpha\beta\gamma} n_{\delta)} + \tau_{\alpha\beta\gamma\delta}
\end{equation}
where $q_\alpha, \tau_{\alpha\beta}, q_{\alpha\beta\gamma}$ and $\tau_{\alpha\beta\gamma\delta}$ are fully symmetric tensors, all of them orthogonal to $n^\alpha$, and besides
$$
\tau^\rho{}_{\rho\gamma\delta}=-\tau_{\gamma\delta}, \hspace{3mm} \tau^\rho{}_\rho =-w, \hspace{3mm} q^\rho{}_{\rho\gamma} = -q_{\gamma}.
$$
By using \eqref{nW} it is easy to obtain the expression of $q_{\alpha}$ in terms of $\ct{\C}{_{\alpha\beta}} $ and $\ct{\D}{_{\alpha\beta}} $:
\begin{equation}\label{q}
q_{\mu} \defeq \ct{\B}{_{\alpha\beta\gamma\rho}} n^\alpha n^\beta n^\gamma h^\rho_\mu  = 2\epsilon_{\mu\rho\sigma} \C^{\tau \rho} \D^\sigma{}_\tau .
\end{equation}
Here, $h^\rho{}_\mu$ is the projector orthogonal to $n^\alpha$
\begin{equation}\label{projector}
h^\rho{}_\mu \defeq \delta^\rho_\mu -n^\rho n_\mu .
\end{equation}
In simpler words, $q_\alpha$ is related to the commutator of the $3\times 3$ matrices defined by $\ct{\C}{_{\alpha\beta}} $ and $\ct{\D}{_{\alpha\beta}} $. 

A similar but longer calculation provides the expression for $q_{\alpha\beta\gamma}$ in terms of $\ct{\C}{_{\alpha\beta}} $ and $\ct{\D}{_{\alpha\beta}} $:
\begin{equation}\label{Q}
q_{\lambda\mu\nu} \defeq \ct{\B}{_{\alpha\beta\gamma\delta}} n^\alpha h^\beta{}_\lambda h^\gamma{}_\mu h^\delta{}_\nu = -q_\lambda h_{\mu\nu} +2\epsilon_{\lambda}{}^{\rho\sigma} (\D_{\mu\sigma} \C_{\nu\rho} +\D_{\nu\sigma} \C_{\mu\rho} ).
\end{equation}

\section{A general result on superenergy. (Derivation of \cref{eq:formula270})}\label{sec:appendix}
The main steps to derive \cref{eq:formula270} are presented in this appendix. The formula is general, for any Weyl tensor candidate $\ct{d}{_{\alpha\beta\gamma}^\delta}$, and pointwise, but the same notation as in the main text is going to be used for clarity. \\
Let $\ct{u}{^\alpha}$ be a unit, timelike, future-oriented vector, and consider the tetrads presented in \cref{fig:set1} ---recall that each future-oriented null vector with subindex $i=1,2,3,4$ is aligned with a different PND of a Weyl tensor candidate. In the starting basis $\prn{{k}{_{1}^\alpha},{k}{_{2}^\alpha}, \ct{m}{^\alpha}}$, write $\ct{u}{^{\alpha}}$ as
\begin{equation}\label{eq:u-components}
	\ct{u}{^\alpha}=a{k}{_1^{\alpha}}+b{k}{_2^\alpha}+d\ct{m}{^\alpha}+\bar{d}\ct{\bar{m}}{^\alpha}\ .
\end{equation}
The components must satisfy
\begin{equation}\label{eq:ab-dd}
ab-d\bar{d}=\frac{1}{2}\ ,\quad a,b>0
\end{equation}
Next, consider a unit space-like vector $\ct{n}{^\alpha}$ orthogonal to $\ct{u}{^{\alpha}}$. Then, contracting  \cref{eq:u-components} with $\ct{n}{_{\alpha}}$, 
	\begin{equation}
	aB_1+bB_2+dD+\bar{d}\bar{D}=0\ ,\label{eq:u-n-orthogonal}
	\end{equation}
where $B_1\defeq\ct{n}{_{\mu}}{k}{_1^\mu}$, $B_2 \defeq \ct{n}{_{\mu}}{k}{_2{^\mu}}$, $D\defeq\ct{n}{_{\mu}}\ct{m}{^\mu}$. \\
The super-Poynting vector $\ct{\P}{^{\alpha}}$ associated to any Weyl tensor candidate and vector $\ct{u}{^\alpha}$ was written in a null basis in an appendix of \cite{Fernandez-Alvarez-Senovilla2022b}. Using that formula, one has (`c.c.' stands for `complex conjugate')
	\begin{align}
	   -\frac{1}{4} \ct{n}{_{\mu}} \cts{\P}{^\mu}\prn{\vec{u}}& =\phi_{1}\bar{\phi}_{1}\brkt{-4B_1a^3-6\prn{ab+d\bar{d}}aB_2}\nonumber\\
	     +& \phi_{1}\bar{\phi}_{2}\brkt{-6a^2dB_1-3\prn{3ab+d\bar{d}}dB_2+3\prn{ab+d\bar{d}}a\bar{D}}+c.c\nonumber\\
	     +&\phi_1\bar{\phi}_{3}\brkt{-4ad^2B_1-4bd^2B_2+2\prn{3ab+d\bar{d}}d\bar{D}}+c.c.\nonumber\\
	     -& \phi_2\bar{\phi}_{2}9\prn{ab+d\bar{d}}\prn{aB_1+bB_2}\nonumber\\
	     +&\phi_{2}\bar{\phi}_{3}\brkt{-3\prn{3ab+d\bar{d}}dB_{1}-6b^2dB_2+3\prn{ab+d\bar{d}}b\bar{D}}+c.c.\nonumber\\
	     +&\phi_{3}\bar{\phi}_{3}\brkt{-6\prn{ab+d\bar{d}}bB_1-4b^3B_2}\ .\label{eq:nP-original}
	\end{align}
Next, one considers the null rotations indicated in \cref{fig:set1}. These are very well known relations \cite{Stephani2003}. For illustration, the formulae transforming $\prn{{k}{_{1}^\alpha},{k}{_{2}^\alpha}, \ct{m}{^\alpha}}$ into $\prn{\tilde{k}{_{3}^\alpha},{k}{_{2}^\alpha}, \tilde{m}{^\alpha}}$ are presented:
	\begin{align}
	    \tilde{k}^\alpha_3&=k_1^\alpha+\tilde{c}m^\alpha+\bar{\tilde{c}}\bar{m}^\alpha+\tilde{c}\bar{\tilde{c}}k^\alpha_{2}\ , \\
	    \tilde{m}^\alpha &= m^\alpha+\bar{\tilde{c}}k_2^\alpha\ .
	\end{align}
In the new basis, define the components of $\ct{u}{^\alpha}$ as
	\begin{equation}\label{eq:utilde}
		\ct{u}{^\alpha}=\tilde{a}\tilde{k}_{3}^\alpha+\tilde{b}k_{2}^\alpha+\tilde{d}\tilde{m}^\alpha+\bar{\tilde{d}}\bar{\tilde{m}}^\alpha\ ,
	\end{equation}
and it can be checked that
	\begin{align}
	    \tilde{a} &= a \ ,\label{eq:tildea}\\
	    \tilde{b} &= b+\tilde{c}\bar{\tilde{c}}a-\prn{\tilde{c}\bar{d}+\bar{\tilde{c}}d}\ ,\label{eq:tildeb}\\
	    \tilde{d} &= d-a\tilde{c}\ .\label{eq:tilded}
	\end{align}
The transformation between the scalars of the Weyl-tensor candidate in the old $\ct{\phi}{_{i}}$ and new $\ct{\tilde{\phi}}{_{i}}$ basis read
	\begin{align}
		\tilde{\phi_{0}}&=4\bar{\tilde{c}}\phi_{1}+6\bar{\tilde{c}}^2\phi_{2}+4\bar{\tilde{c}}^3\phi_{3}=0,\\
		\tilde{\phi}_{1}&=\phi_{1}+ 3\bar{\tilde{c}}\phi_{2}+3\bar{\tilde{c}}^2\phi_{3}\\
		\tilde{\phi}_{2}&=\phi_{2}+2\bar{\tilde{c}}\phi_{3}\\
		\tilde{\phi}_{3}&=\phi_{3}\\
		\tilde{\phi}_{4}&=\phi_{4}=0\ .
			\end{align}
Recall that the null directions are aligned with different PNDs ($i=1,2,3,4$), hence $\phi_{0}=0$, $\phi_{4}=0$ in all the bases considered.  One can then write the formulae for each of the null rotations indicated in \cref{fig:set1} by adding the corresponding decorations (hats, tildes, etc). Importantly, the four null rotations are not independent, as one can check that
	\begin{equation}\label{eq:c-inverse}
		\bar{c}^{\prime}=\frac{1}{\tilde{c}}\ ,\quad\bar{\hat{c}}=\frac{1}{{\breve{c}}}\ .
	\end{equation}
Also,
	\begin{equation}
	\breve{k}_{4}^\alpha=\breve{c}\bar{\breve{c}}\hat{k}_{4}^\alpha\ ,\quad k_{3}^{\prime\alpha}=c'\bar{c}'k^\alpha_{3}\ ,
	\end{equation}
	\begin{equation}
	\breve{b}=\breve{c}\bar{\breve{c}}\hat{a}\ ,\quad a'=c'\bar{c}'\tilde{b}\ .
	\end{equation}
The products $\tilde{B_3}\defeq \ct{n}{_{\alpha}}\tilde{k}^\alpha_{3}$,${B^\prime_3}\defeq \ct{n}{_{\alpha}}{k^\prime}^\alpha_{3}$, $\breve{B}_4\defeq \ct{n}{_{\alpha}}\breve{k}_{4}^\alpha$, $\hat{B}_4\defeq \ct{n}{_{\alpha}}\hat{k}_{4}^\alpha$ fulfil
	\begin{align}
		\tilde{B}_3&=B_1+\tilde{c}D+\bar{\tilde{c}}\bar{D}+\tilde{c}\bar{\tilde{c}}B_2\ ,\label{eq:B3-B1B2}\\
		\breve{B}_{4}&=B_1+\breve{c}D+\bar{\tilde{c}}\bar{D}+\breve{c}\bar{\breve{c}}B_2\ \label{eq:B4-B1B2},\\
		\tilde{B}_3&=\tilde{c}\bar{\tilde{c}}B^\prime_3\ ,\\
		\breve{B}_{4}&=\breve{c}\bar{\breve{c}}\hat{B}_{4}\ .\label{eq:B4}
	\end{align} 	

Using the relations between the $\phi_i$ in the different bases, it also follows that
	\begin{align}
		\phi_{2}&=-\frac{2}{3}\phi_{3}\prn{\bar{\breve{c}}+\bar{\tilde{c}}}=-\frac{2}{3}\phi_{1}\prn{\hat{c}+c'}\ ,\label{eq:phis-relation-1}\\
		\phi_{3}&=c^\prime \hat{c}\phi_{1}\ ,\quad  \phi_{1}=\bar{\breve{c}}\bar{\tilde{c}}\phi_{3}\ .\label{eq:phis-relation-2}
	\end{align}
The idea is to express \cref{eq:nP-original} in terms of $B$'s solely, removing all complex $D$'s. For that, one has to substitute $\phi_1$ and $\phi_2$ in terms of $\phi_3$ using \cref{eq:phis-relation-1,eq:phis-relation-2}. After doing so, one encounters terms that read
	\begin{align}
		2\prn{\bar{\breve{c}}\bar{\tilde{c}}d\bar{D}+\breve{c}\tilde{c}\bar{d}D}=\prn{d\bar{\tilde{c}}+\bar{d}\tilde{c}}\prn{\bar{\breve{c}}\bar{D}+\breve{c}D}+\prn{d\bar{\breve{c}}+\bar{d}\breve{c}}\prn{\bar{\tilde{c}}\bar{D}+\tilde{c}D}-\prn{\tilde{c}\bar{\breve{c}}+\bar{\tilde{c}}{\breve{c}}}\prn{dD+\bar{d}\bar{D}}\ ,
	\end{align}
and using \cref{eq:B3-B1B2,eq:B4-B1B2,eq:u-n-orthogonal} they can be expressed as
	\begin{align}
	2\prn{\bar{\breve{c}}\bar{\tilde{c}}d\bar{D}+\breve{c}\tilde{c}\bar{d}D}&=\prn{d\bar{\tilde{c}}+\bar{d}\tilde{c}}\prn{\breve{B}_{4}-B_{1}-\breve{c}\bar{\breve{c}}B_{2}}\nonumber\\
	&+\prn{d\bar{\breve{c}}+\bar{d}\breve{c}}\prn{\tilde{B}_{3}-B_{1}-\tilde{c}\bar{\tilde{c}}B_2}\nonumber\\
	&+\prn{\tilde{c}\bar{\breve{c}}+\bar{\tilde{c}}{\breve{c}}}\prn{aB_1+bB_2}\ .
	\end{align}
After a long rearrangement of terms using this kind of substitutions, one obtains 
	\begin{equation}\label{eq:rearraengement}
	 -\frac{1}{4}\ct{n}{_{\mu}} \cts{\P}{^\mu}\prn{\vec{u}}=\phi_{3}\bar{\phi}_{3}\prn{J_{1}B_1+ J_{2}B_2+ J_3\tilde{B}_3+J_4\breve{B}_4}\ ,
	\end{equation}
where
	\begin{align}
	    J_{1} &=- a \Bigg[4\prn{\breve{c}\bar{d}+\bar{\breve{c}}d}\prn{\tilde{c}\bar{d}+\bar{\tilde{c}}d}-4a\breve{c}\bar{\breve{c}}\prn{\tilde{c}\bar{d}+\bar{\tilde{c}}d}-4a\tilde{c}\bar{\tilde{c}}\prn{\bar{\breve{c}}d+\breve{c}\bar{d}} \nonumber\\
	    +&4a^2\breve{c}\bar{\breve{c}}\tilde{c}\bar{\tilde{c}}+\prn{ab+3d\bar{d}}\prn{\breve{c}\bar{\breve{c}}+\tilde{c}\bar{\tilde{c}}}+\prn{ab-d\bar{d}}\prn{\tilde{c}+\breve{c}}\prn{\bar{\tilde{c}}+\bar{\breve{c}}}\Bigg] \nonumber\\
	    + & \prn{3ab+d\bar{d}}\prn{\tilde{c}\bar{d}+\bar{\tilde{c}}d+\breve{c}\bar{d}+\bar{\breve{c}}d}-2b\prn{ab+d\bar{d}}\ ,\label{eq:J1}\\
	    J_{2}  &= b\Bigg[-2\prn{ab+d\bar{d}}\prn{\tilde{c}\bar{\tilde{c}}+\breve{c}\bar{\breve{c}}}-\prn{\bar{\breve{c}}\tilde{c}+\breve{c}\bar{\tilde{c}}}\prn{ab-d\bar{d}}\nonumber \\
	    + &4b\prn{\tilde{c}\bar{d}+\bar{\tilde{c}}d+\breve{c}\bar{d}+\bar{\breve{c}}d}- 4\prn{\tilde{c}\bar{d}+\bar{\tilde{c}}d}\prn{\breve{c}\bar{d}+\bar{\breve{c}}d}-4b^2\Bigg] \nonumber\\
	    +&\prn{3ab+d\bar{d}}\brkt{\breve{c}\bar{\breve{c}}\prn{\tilde{c}\bar{d}+\bar{\tilde{c}}d}+\tilde{c}\bar{\tilde{c}}\prn{
	    \breve{c}\bar{d}+\bar{\breve{c}}d}}-2a\breve{c}\bar{\breve{c}}\tilde{c}\bar{\tilde{c}}\prn{ab+d\bar{d}}\ ,\label{eq:J2}\\
	    J_3&=\prn{3ab+d\bar{d}}\prn{\breve{c}\bar{d}+\bar{\breve{c}}d}-2a\breve{c}\bar{\breve{c}}\prn{ab+d\bar{d}}-2b\prn{ab+d\bar{d}}\ ,\label{eq:J3}\\
	    J_4&=\prn{3ab+d\bar{d}}\prn{\tilde{c}\bar{d}+\bar{\tilde{c}}d}-2a\tilde{c}\bar{\tilde{c}}\prn{ab+d\bar{d}}-2b\prn{ab+d\bar{d}}\ .\label{eq:J4}
	\end{align}
	$J_3$ and $J_4$ can be rewritten  in terms of the components of $\ct{u}{^{\alpha}}$ in the different basis of \cref{fig:set1} as\footnote{This manifestly positive expression is not unique; there exist other possible combinations of products of (decorated) $ab$ and $d\bar{d}$ with different coefficients. For convenience and brevity these are not presented here.}
		\begin{align}
		    -J_3 &= a\breve{c}\bar{\breve{c}}\prn{\frac{1}{3}\hat{a}\hat{b}+\hat{d}\bar{\hat{d}}}+b\prn{\frac{1}{3}\breve{a}\breve{b}+\breve{d}\bar{\breve{d}}}+\breve{b}\prn{\frac{1}{3}ab+d\bar{d}}\ , \label{eq:J3p}\\
		    -J_4 &=a\tilde{c}\bar{\tilde{c}}\prn{\frac{1}{3}a'b'+d'\bar{d}'}+b\prn{\frac{1}{3}\tilde{a}\tilde{b}+\tilde{d}\bar{\tilde{d}}}+\tilde{b}\prn{\frac{1}{3}ab+d\bar{d}}\ . \label{eq:J4p}
		\end{align} 
	Observe that the r.-h.s. of these equations has a positive sign. This is precisely the desired property. The expressions in \cref{eq:J1,eq:J2} for $J_1$ and $J_2$ look more complicated though. They can be manipulated in a similar way to the other two, giving
		\begin{align}
			-J_{1}&=\breve{b}\prn{\tilde{a}\tilde{b}+\tilde{d}\bar{\tilde{d}}}+\tilde{b}\prn{\breve{a}\breve{b}+\breve{d}\bar{\breve{d}}}-\frac{1}{2}a\prn{\tilde{c}-\breve{c}}\prn{\bar{\tilde{c}}-\bar{\breve{c}}}\ ,\label{eq:J1-step}\\
			-J_{2}&=\tilde{c}\bar{\tilde{c}}\breve{c}\bar{\breve{c}}\brkt{\hat{a}\prn{a'b'+d'\bar{d}'}+a'\prn{\hat{a}\hat{b}+\hat{d}\bar{\hat{d}}}}-\frac{b}{2}\prn{\tilde{c}-\breve{c}}\prn{\bar{\tilde{c}}-\bar{\breve{c}}}\ .\label{eq:J2-step}
		\end{align}
	 This is so because one has started the null rotations departing from the central basis in \cref{fig:set1}, i.e., $\cbrkt{\ct{k_1}{^\alpha},\ct{k_{2}}{^\alpha},\ct{m}{^{\alpha}}}$. The strategy is to reach a formula that is completely `symmetric' with respect to $\cbrkt{\ct{k_1}{^\alpha},\ct{k_{2}}{^\alpha},\ct{m}{^{\alpha}}}$ and $\cbrkt{\ct{k_3}{^\alpha},\ct{k_{4}}{^\alpha},\ct{m}{^{\alpha}}}$. Thus, one has to start from the central tetrad but now of \cref{fig:set2}. Some attention must be paid to notation now, as same letters for the components of $\ct{u}{^\alpha}$ in different bases are used, only that the decoration bar `$\ubar{\ }$' is used to distinguish them  from quantities defined in \cref{fig:set1}. Note also that the complex null rotation parameters are called now $f$ (with different decorations), instead of $c$. The proportionality relation between the null vectors can be written as
		\begin{align}
		    \ct{\tilde{k}_{1}}{^\alpha} &= \frac{1}{\alpha_{1}}\ct{k_{1}}{^\alpha}=\tilde{f}\bar{\tilde{f}}\ct{k_1'}{^\alpha}\ ,\label{eq:k1}\\
		     \ct{\breve{k}_{2}}{^\alpha} &= \frac{1}{\alpha_{2}}\ct{k_{2}}{^\alpha}=\breve{f}\bar{\breve{f}}\ct{\hat{k}_{2}}{^\alpha}\ , \label{eq:k2}\\
		     \ct{k_{3}}{^{\alpha}}&=\alpha_{3}\ct{\tilde{k}_{3}}{^{\alpha}}=\alpha_{3}\tilde{c}\bar{\tilde{c}}\ct{k'_{3}}{^{\alpha}}\ ,\label{eq:k3}\\
		     \ct{k_{4}}{^{\alpha}}&=\alpha_{4}\ct{\breve{k}_{4}}{^\alpha}=\alpha_{4}\breve{c}\bar{\breve{c}}\ct{\hat{k}_{4}}{^\alpha}\ .\label{eq:k4}
		\end{align}
	The proportionality functions $\alpha_{i}>0$ obey
		\begin{equation}\label{eq:alphas}
			\alpha_{1}=\breve{c}\bar{\breve{c}}\alpha_{4}\ ,\quad\alpha_{2}=\alpha_{4}\ ,\quad\alpha_{2}=\alpha_{3}\hat{f}\bar{\hat{f}}\ ,\quad\alpha_{1}=\alpha_{3}f'\bar{f}'\tilde{c}\bar{\tilde{c}}\ ,
		\end{equation}
	and		
		\begin{equation}\label{eq:alphas-cs}
			\frac{1}{\alpha_{3}\alpha_{4}}=\frac{1}{\alpha_{2}\alpha_{3}}=-\ct{\breve{k}_{4}}{^\alpha}\ct{\tilde{k}_{3}}{_{\alpha}}=\prn{\tilde{c}-\breve{c}}\prn{\bar{\tilde{c}}-\bar{\breve{c}}}\ .
		\end{equation}
	Some relations between components in the bases of \cref{fig:set1} and \cref{fig:set2} are
		\begin{equation}
			\ubar{b}=\tilde{b}\alpha_{3}\ ,\quad \ubar{a}'=\frac{f'\bar{f}'}{\alpha_{1}}b\ ,\quad \ubar{a}=\alpha_{4}\breve{b}\ ,\quad \ubar{\breve{b}}=\frac{1}{\alpha_{2}}a\ .
		\end{equation}
	Also, note that since $\ct{\breve{\ubar{m}}}{^\alpha}$ ($\hat{\ubar{m}}^\alpha$) and $\ct{\breve{m}}{^{\alpha}}$ ($\ct{\tilde{m}}{^{\alpha}}$) are both orthogonal to $\ct{k_{2}}{^{\alpha}}$, $\ct{k_{4}}{^{\alpha}}$ ($\ct{k_{1}}{^\alpha} $, $\ct{k_{4}}{^{\alpha}}$), $\breve{\ubar{d}}$ ($\hat{\ubar{d}}$) and $\breve{d}$ ($\tilde{d}$) differ at most by a phase, 
		\begin{equation}
			\ubar{\breve{d}}\bar{\ubar{\breve{d}}}=\breve{d}\bar{\breve{d}}\ ,\quad \hat{\ubar{d}}\bar{\hat{\ubar{d}}}=\tilde{d}\bar{\tilde{d}}\ .
		\end{equation}
	Using these relations,
	\begin{equation}\label{eq:J1p}
			\ubar{a}\breve{f}\bar{\breve{f}}\prn{\frac{1}{3}\ubar{\hat{a}}\ubar{\hat{b}}+\ubar{\hat{d}}\bar{\ubar{\hat{d}}}}+\ubar{b}\prn{\frac{1}{3}\ubar{\breve{a}}\ubar{\breve{b}}+\ubar{\breve{d}}\bar{\ubar{\breve{d}}}}+\ubar{\breve{b}}\prn{\frac{1}{3}\ubar{a}\ubar{b}+\ubar{d}\bar{\ubar{d}}}=-\alpha_{3}J_1\ .\\
	\end{equation}
	Similarly, 
	\begin{equation}\label{eq:J2p}
		\ubar{a}\tilde{f}\bar{\tilde{f}}\prn{\frac{1}{3}\ubar{a}'\ubar{b}'+\ubar{d}'\bar{\ubar{d}}'}+\ubar{b}\prn{\frac{1}{3}\tilde{\ubar{a}}\tilde{\ubar{b}}+\tilde{\ubar{d}}\bar{\tilde{\ubar{d}}}}+\tilde{\ubar{b}}\prn{\frac{1}{3}\ubar{a}\ubar{b}+\ubar{d}\bar{\ubar{d}}}=-\hat{c}\bar{\hat{c}}\alpha_{3}J_{2}\ .
	\end{equation}
	But observe that these two equations show that $-J_1$ and $-J_2$ are positive too. Moreover, using \cref{eq:k1,eq:k2,eq:k3,eq:k4} together with \cref{eq:alphas,eq:alphas-cs} and the definition of $\phi_{3}$,
	\begin{equation}
		\alpha_{3}\ubar{\phi}_{3}\bar{\ubar{\phi}_{3}}=\phi_{3}\bar{\phi}_{3}
	\end{equation}
	Taking into account this last relation, inserting \cref{eq:J3p,eq:J4p,eq:J1p,eq:J2p} into \cref{eq:rearraengement} leads to
	\begin{align}
 		\ct{n}{_{\mu}}\ct{\P}{^\mu}\prn{\vec{u}}=4\ubar{\phi}_3\bar{\ubar{\phi}_3}&\Bigg\lbrace B_1\brkt{\ubar{a}\breve{f}\bar{\breve{f}}\prn{\frac{1}{3}\hat{\ubar{a}}\hat{\ubar{b}}+\hat{\ubar{d}}\bar{\hat{\ubar{d}}}}+\ubar{b}\prn{\frac{1}{3}\breve{\ubar{a}}\breve{\ubar{b}}+\breve{\ubar{d}}\bar{\breve{\ubar{d}}}}+\ubar{\breve{b}}\prn{\frac{1}{3}{\ubar{a}}{\ubar{b}}+{\ubar{d}}\bar{{\ubar{d}}}}}\nonumber\\
 		&+B_2\breve{c}\bar{\breve{c}}\brkt{\ubar{a}\tilde{f}\bar{\tilde{f}}\prn{\frac{1}{3}{\ubar{a}^\prime}{\ubar{b}^\prime}+{\ubar{d^\prime}}\bar{{\ubar{d}^\prime}}}+\ubar{b}\prn{\frac{1}{3}\tilde{\ubar{a}}\tilde{\ubar{b}}+\tilde{\ubar{d}}\bar{\tilde{\ubar{d}}}}+\ubar{\tilde{b}}\prn{\frac{1}{3}{\ubar{a}}{\ubar{b}}+{\ubar{d}}\bar{{\ubar{d}}}}}\Bigg\rbrace\nonumber\\
 		+4\phi_{3}\bar{\phi_{3}}&\Bigg\lbrace\tilde{B}_{3}\brkt{{a}\breve{c}\bar{\breve{c}}\prn{\frac{1}{3}\hat{{a}}\hat{{b}}+\hat{{d}}\bar{\hat{{d}}}}+{b}\prn{\frac{1}{3}\breve{{a}}\breve{{b}}+\breve{{d}}\bar{\breve{{d}}}}+{\breve{b}}\prn{\frac{1}{3}{{a}}{{b}}+{{d}}\bar{{{d}}}}}\nonumber\\
 		&+\breve{B}_4\brkt{{a}\tilde{c}\bar{\tilde{c}}\prn{\frac{1}{3}{{a}^\prime}{{b}^\prime}+{{d^\prime}}\bar{{{d}^\prime}}}+{b}\prn{\frac{1}{3}\tilde{{a}}\tilde{{b}}+\tilde{{d}}\bar{\tilde{{d}}}}+{\tilde{b}}\prn{\frac{1}{3}{{a}}{{b}}+{{d}}\bar{{{d}}}}}\Bigg\rbrace\ .\label{eq:general-formula}
 	\end{align}
	 	This is the formula reported in the main text. It serves to show at a glance that if all PNDs have the same orientation with respect to the unit spacelike vector $\ct{n}{^{\alpha}}$ (e.g., $B_i>0$ or $B_i<0$ for all $i$), $\ct{n}{_{\mu}}\ct{\P}{^\mu}\prn{\vec{u}}$ inherits that sign, and if all PNDs are orthogonal to $\ct{n}{^{\alpha}}$, then $\ct{n}{_{\mu}}\ct{\P}{^\mu}\prn{\vec{u}}$ vanishes. This formula is valid for any Weyl tensor candidate with any Petrov type. In the following, we consider the algebraically special cases, which lead to simplifications of \cref{eq:general-formula}.
	 	
	 		\subsection*{Type II}
	 		To find \cref{eq:general-formula} in the algebraically special cases, one can start from scratch, following the same steps indicated before, and using that now one of the PND's is repeated. For that, indeed, there is no need to use both sets of bases shown in \cref{fig:set1,fig:set2}. Instead, what will be shown next is how to make the limit from \cref{eq:general-formula} to algebraically special cases.\\
	 		
	 		To do so, it is more easy to start one step before, with \cref{eq:rearraengement} with the $J_i$ given in one set of bases \cref{eq:J1,eq:J2,eq:J3,eq:J4}. First, observe that according to \cref{fig:set1}, to have a repeated PND one needs to set $\ct{k_{1}}{^\alpha}=\ct{\tilde{k}_{3}}{^{\alpha}}$. This means
	 			\begin{equation}
	 				\tilde{c}=0\ .
	 			\end{equation}
	 		One then has
	 			\begin{equation}
	 			-J_1=-2a\breve{c}\bar{\breve{c}}\prn{ab+d\bar{d}}+\prn{3ab+d\bar{d}}\prn{\breve{c}\bar{d}+\bar{\breve{c}}d}-2b\prn{ab+d\bar{d}}\ .
	 			\end{equation}
	 		However, this has a manifestly positive expression,
	 			\begin{equation}\label{eq:J1-II}
	 			-J_1=a\breve{c}\bar{\breve{c}}\prn{\frac{1}{3}\hat{a}\hat{b}+\hat{d}\bar{\hat{d}}}+b\prn{\frac{1}{3}\breve{a}\breve{b}+\breve{d}\bar{\breve{d}}}+\breve{b}\prn{\frac{1}{3}ab+d\bar{d}}.	
	 			\end{equation}
	 		This should not be a surprise, as this is the expression for $J_3$ in \cref{eq:J3p}. Also
	 			\begin{equation}
	 				B_1=\tilde{B}_{3}\ ,
	 			\end{equation}
	 		so that in the final formula one will sum both contributions (\cref{eq:J1-II,eq:J3p}) leaving on single term multiplied by $B_1$.
	 		Setting $\tilde{c}=0$ now in \cref{eq:J2} gives
	 			\begin{equation}
	 				-J_2=2b\breve{c}\bar{\breve{c}}\prn{ab+d\bar{d}}-4b^2\prn{\breve{c}\bar{d}+\bar{\breve{c}}d}+4b^3\ ,
	 			\end{equation}
	 		which, recalling \cref{eq:c-inverse}, gives this time
	 			\begin{equation}
	 				-J_2=2b\breve{c}\bar{\breve{c}}\prn{\hat{a}\hat{b}+\hat{d}\bar{\hat{d}}}\ .
	 			\end{equation}
	 		It only remains to evaluate $J_{4}$. This is easy, as from \cref{eq:J4} it reads
	 			\begin{equation}
	 				-J_4=2b\prn{ab+d\bar{d}}\ .
	 			\end{equation}
	 		Finally, and to make the subsequent limit to type D easier, use \cref{eq:phis-relation-1} to write
	 			\begin{equation}
	 				\phi_{3}\bar{\phi}_{3}=\frac{9}{4}\hat{c}\bar{\hat{c}}\phi_{2}\bar{\phi}_{2}\ .
	 			\end{equation}
	 		The final formula for type II reads
	 			\begin{align}	\ct{n}{_{\mu}}\ct{\P}{^\mu}\prn{\vec{u}}=18{\phi}_2\bar{{\phi}_2}&\Bigg\lbrace B_1\brkt{a\prn{\frac{1}{3}\hat{a}\hat{b}+\hat{d}\bar{\hat{d}}}+b\hat{c}\bar{\hat{c}}\prn{\frac{1}{3}\breve{a}\breve{b}+\breve{d}\bar{\breve{d}}}+\hat{a}\prn{\frac{1}{3}ab+d\bar{d}}}\nonumber\\
			 		&+B_2b\prn{\hat{a}\hat{b}+\hat{d}\bar{\hat{d}}}+\breve{B}_4\hat{c}\bar{\hat{c}}{b\prn{ab+d\bar{d}}}\Bigg\rbrace\ .\label{eq:typeII}
	 			\end{align}
	 		\subsection*{Type D}
	 			To derive the formula for type D from case type II, one has to make the limit $\hat{c}=0$. First recall \cref{eq:B4,eq:c-inverse}, which tells that $\hat{c}\bar{\hat{c}}\breve{B}_4=\hat{B}_{4}$ is safe in the limit. Also, one can compute separately the limit of the following term
	 				\begin{align}
	 					\lim_{\hat{c}\to 0} b\hat{c}\bar{\hat{c}}\prn{\frac{1}{3}\breve{a}\bar{\breve{b}}+\breve{d}\bar{\breve{d}}}=\frac{4}{3}a^2b\ ,
	 				\end{align}
	 			which follows from the `$\breve{\ }$' version of \cref{eq:tildea,eq:tildeb,eq:tilded}. Observe that in this case $\hat{a}=a$ and $\hat{d}=d$ and that $\hat{B}_{4}=B_{2} $. All in all, one gets
	 				\begin{equation}\label{eq:typeD}
	 				\ct{n}{_{\mu}}\ct{\P}{^\mu}\prn{\vec{u}}=36{\phi}_2\bar{{\phi}_2}\prn{aB_{1}+bB_{2}}\prn{ab+d\bar{d}}\ .
	 				\end{equation}
	 		\subsection*{Type III}
	 			For this, $\breve{c}=0$, and it is more convenient to start with \cref{eq:typeII} written in terms of $\phi_{3}$, i.e.,
 					\begin{align}	\ct{n}{_{\mu}}\ct{\P}{^\mu}\prn{\vec{u}}=8{\phi}_3\bar{{\phi}_3}&\Bigg\lbrace B_1\brkt{a\breve{c}\bar{\breve{c}}\prn{\frac{1}{3}\hat{a}\hat{b}+\hat{d}\bar{\hat{d}}}+b\prn{\frac{1}{3}\breve{a}\breve{b}+\breve{d}\bar{\breve{d}}}+\breve{b}\prn{\frac{1}{3}ab+d\bar{d}}}\nonumber\\
 							 		&+B_2b\breve{c}\bar{\breve{c}}\prn{\hat{a}\hat{b}+\hat{d}\bar{\hat{d}}}+\breve{B}_4{b\prn{ab+d\bar{d}}}\Bigg\rbrace\ .
 					 			\end{align}
 				Next, observe that $\breve{B}_{4}=B_{1}$, and $\breve{b}=b$, $\breve{d}=d$, $\breve{a}=a$. Also, using the null-rotation transformation formulae,
 				\begin{align}
 				\lim_{\breve{c}\to 0}\breve{c}\bar{\breve{c}}\hat{a}\hat{b}=\lim_{\breve{c}\to 0}\breve{c}\bar{\breve{c}}\hat{d}\bar{\hat{d}}=2b^2\ .
 				\end{align}
 			Using all these relations, one gets
 				\begin{equation}
 					\ct{n}{_{\mu}}\ct{\P}{^\mu}\prn{\vec{u}}=8{\phi}_3\bar{{\phi}_3}\brkt{ B_13b\prn{ab+d\bar{d}}+b^3B_2} .
 				\end{equation}
	 		\subsection*{Type N}
		 		Finally, type N cannot be derived by taking the limit, since in the bases \cref{fig:set1,fig:set2} one assumes there are at least two different PNDs. In the case considered now, there is just one repeated PND $\ct{k_{1}}{^\alpha}$. Hence, one has to derive the result directly from \cref{eq:nP-original}. Luckily, this is a trivial task, as now all $\phi_i$ vanish except for $\phi_4\neq 0$. This results into
	 			\begin{equation}
	 					\ct{n}{_{\mu}}\ct{\P}{^\mu}\prn{\vec{u}}=4\phi_{4}\bar{\phi}_{4}b^3B_{1}\ .
	 			\end{equation}

\printbibliography
\end{document}